  \def\ps@pprintTitle{%
 \let\@oddhead\@empty
 \let\@evenhead\@empty
 \def\@oddfoot{\centerline{\thepage}}%
 \let\@evenfoot\@oddfoot}
\journal{TBA}
\newtheorem{theorem}{Theorem}
\newtheorem{proposition}[theorem]{Proposition}
\newtheorem{corollary}[theorem]{Corollary}
\newtheorem{definition}[theorem]{Definition}
\newdefinition{rmk}{Remark}
\newdefinition{assumption}{Assumption}
\newproof{proof}{Proof}
\newproof{pot}{Proof of Theorem \ref{thm2}}
\newcommand{\tDelta}{{\tilde\Delta}}
\newcommand{\txi}{{\tilde\xi}}
\newcommand{\trho}{{\tilde\rho}}
\newcommand{\hxi}{{\hat\xi}}
\newcommand{\wOmega}{{\widehat\Omega}}
\newcommand{\cxi}{{\check\xi}}
\newcommand{\crho}{{\check\rho}}
\newcommand{\bxi}{{\breve\xi}}
\newcommand{\brho}{{\breve\rho}}
\newcommand{\PP}{{\mathbb P}}
\newcommand{\II}{{\mathds 1}}
\newcommand{\EE}{{\mathbb E}}
\begin{document}

\begin{frontmatter}

\title {\textbf{Foreign Exchange Markets with Last Look}
\tnoteref{t1}\\[1em]
\textit{Mathematics and Financial Economics, Forthcoming}\\
{\small Cartea, {\'A}., Jaimungal, S. \& Walton, J. Math Finan Econ (2018). https://doi.org/10.1007/s11579-018-0218-3}
}
\tnotetext[t1]{SJ would  like to thank NSERC and GRI for partially funding this work. The authors thank C. Alexander, R. Anderson, F. Barnes, E. Benos, J. Danielsson, A. Gerig, PK Jain, C.A. Lehalle,  R. Oomen, H. Ralston,  Y. Sharaiha,  C. Vega, J.P. Zigrand, and seminar participants at the University of Sussex,  Norges Bank Investment Management, U.S. Securities Exchange Commission (SEC), U.S. Commodity Futures Trading Commission (CFTC), Board of Governors of the Federal Reserve System Washington DC, 6th Annual Stevens Conference on High-Frequency Finance and Analytics,  Market Microstructure CFM-Imperial 2015, the Systemic Risk Centre (LSE), University of Oxford, and the Bank of England for  comments on an earlier version of this article.}

\author[author1]{\'Alvaro Cartea}
\ead{alvaro.cartea@maths.ox.ac.uk}
\author[author2]{Sebastian Jaimungal}
\ead{sebastian.jaimungal@utoronto.ca}
\author[author3]{Jamie Walton}
\ead{jamie.walton@ucl.ac.uk}

\address[author1] {Mathematical Institute, University of Oxford, Oxford, UK\\  Oxford-Man Institute of Quantitative Finance, Oxford, UK}
\address[author2] {Department of Statistical Sciences, University of Toronto, Toronto, Canada}
\address[author3] {Department of Mathematics, University College London, London UK}

\begin{abstract}
We examine the Foreign Exchange (FX) spot price spreads with and without Last Look on the transaction. We assume that brokers are risk-neutral and they quote spreads so that losses to latency arbitrageurs (LAs) are recovered from other traders in the FX market. These losses are reduced if the broker can reject, ex-post, loss-making trades by enforcing the Last Look option which is a feature of some trading venues in FX markets. For a given rejection threshold the risk-neutral broker quotes a spread to the market so that her expected profits are zero. When there is only one venue, we find that the Last Look option reduces quoted spreads. If there are two venues  we show that the market reaches an equilibrium where traders have no incentive to migrate. The equilibrium can be reached with both venues coexisting, or with only one venue surviving. Moreover, when one venue enforces Last Look and the other one does not, counterintuitively, it may be the case that the Last Look venue quotes larger spreads.
\end{abstract}
\begin{keyword}
Last Look; Foreign Exchange; Latency Arbitrage; Spamming; Spraying; Stale Quotes; Algorithmic Trading; Low Latency Traders, High-Frequency Trading.
\end{keyword}

\end{frontmatter}

\section{Introduction}

The Foreign Exchange (FX) marketplace has some unique structures which have lead to specific solutions for both exchanges and market makers. Unlike equities, there are less than 100 actively traded currencies and many can be traded across multiple platforms simultaneously. As there is no central exchange framework in FX, many Electronic Crossing Networks (ECNs) exist to service trading of currencies. The most common G10 currencies may be available to trade in more than 20 ECNs with multiple liquidity providers. Additionally, most major banks offer access to trade currencies through their own platforms either using an application or over an application programming interface (API) as well as through many ECNs.

In high-frequency trading, liquidity providers making markets on multiple streams are exposed to many risks. The technology race to reduce latency between exchanges has created an opportunity to extract value through latency arbitrage. This can manifest as a fast market participant trading on prices shown by slower liquidity providers in a rapidly updating market, and is not necessarily malicious. However, when the market taker is intentionally trading with the last liquidity provider to update her prices, or on stale quotes, then it may become necessary for the liquidity provider to construct a form of protection to prevent the misuse of her liquidity. %This protection is a way to limit the constant race for lower latency which increases technology costs for liquidity providers and ultimately increases costs for traders -- see, in the context of equity markets, \cite{biais2015equilibrium}. The recent study by \cite{foucault2014toxic} show that latency arbitrageurs who trade on stale quotes worsens the quality of the FX market by impairing liquidity.

A second concern for market makers is that they frequently show larger liquidity than what they have available. They do this because large market makers display prices/liquidity on multiple ECNs in the fragmented FX marketplace and at the same time provide streaming prices  to traders through APIs. This can mean that there exist thousands of potential streams where they are exposed to some notional amount of liquidity. Instantaneously, this liquidity does not represent the prices they are prepared to show in the full amount. Typically, however, if one-sided liquidity starts to be accessed on multiple venues simultaneously, then the market maker updates prices to all streams to reflect the new value of liquidity -- and ideally to attract traders to take them out of the risk by crossing some part of the spread. The risk therefore lies on the ability of the market maker to update prices on all streams in a rapid manner and thus is also at risk of latency arbitrage.

Generally, larger size trades have a larger bid-offer spread to represent the additional cost in trading out of the risk. In order to reduce transaction costs some traders may choose to split up a large order into smaller standard size amounts and hit liquidity on multiple venues simultaneously. This reduces the cost for the trader, but  exposes liquidity providers to the risk that the market will run away from them as they try to exit this position. In FX this activity is sometimes referred to as `spamming the market'.\footnote{There seems to be no general term in the FX industry that refers to the activity when a trader takes liquidity (same currency pair) in different venues at the same time. Here we use the term spamming or spraying for this type of activity. } The trader may also be accessing the same underlying source of liquidity on multiple venues if the best price on the ECNs is offered by the same provider. This is clearly a problem for the market maker.\footnote{If the trader were to request a quote for the full amount, rather than the child orders, the broker would quote a wider spread than that quoted for smaller orders. Wider spreads for large size FX orders are equivalent to large orders in equity markets walking the limit order book. }

%Last-look
There are some measures that market makers and ECNs can take to limit the exposure to  latency arbitrage strategies and to market takers spamming the market. In FX, some ECNs allow liquidity providers `Last Look': after a trader has traded on a market maker's price then the `Last Look' is a fixed period of time in which the market maker has an option to reject the trade. Generally the trade is  rejected if in this fixed period of time the trade moves against the market maker beyond some threshold. The market maker is inferring that the trader may be taking advantage of the liquidity and is essentially withdrawing the price they made to market. Doing so can neutralize the effect of a latency arbitrage as well as providing protection against market spamming, at least over the interval of time that Last Look is active, typically measured in milliseconds. Market makers may also use Last Look trade rejections on price streams provided to traders, particularly for traders who trade at a higher frequency.

In over-the-counter transactions  FX brokers stream quotes to a wide range of clients. A key characteristic that differentiates clients is their ability to see quote updates,  react to market news, and trade on the most up-to-date public information. Having access to low latency technology is expensive. FX brokers who stream prices recognize that not all clients have the capability of seeing the most recent quote and may come to the market trying to execute a trade on a stale quote at a price which could be advantageous to either the client or the broker. Thus, it is not unusual for brokers to allow trades on stale quotes, despite having streamed a new quote, because she wishes to attract order flow which could convey information that she may use to update her quotes.

The broker cannot  discern amongst the different strategies employed by an individual trader, in particular whether the trade is taking advantage of latency. For example, institutional investors often employ many strategies, some of which may involve latency arbitrages. Thus, Last Look is a measure designed for a type of strategy, not for a particular type of trader. In this paper we classify trades as either a latency arbitrage or non-latency arbitrage.  We allocate the latency arbitrage trades as the activity emanating from latency arbitrageurs (LAs),  and the other trades as activity from  slow traders (STs). Clearly, trades from market participants who employ both types of strategies will sometimes be  classified as coming from LAs and others from  STs. This slight abuse of nomenclature helps to clarify the setup of the model and discussion of the Last Look option in the rest of the paper.

%A distinctive feature of FX markets is that most  trades are over-the-counter,  so those involved  know the identity of their counterpart -- there are exceptions, e.g. trading via ECNs. However,  the market maker cannot use this information to discern amongst the different strategies employed by an individual trader, in particular whether the trade is taking advantage of latency. For example, institutional investors often employ many strategies, some of which may involve latency arbitrages. Thus, Last Look is a measure designed for a type of strategy, not for a particular type of trader. In this paper we classify trades as either a latency arbitrage or non-latency arbitrage.  We allocate the latency arbitrage trades as the activity emanating from latency arbitrageurs (LAs),  and the other trades as activity from  slow traders (STs). Clearly, trades from market participants who employ both types of strategies will sometimes be  classified as coming from LAs and others from  STs. This slight abuse of nomenclature helps to clarify the setup of the model and discussion of the Last Look option in the rest of the paper.

Last Look is a controversial topic in the FX marketplace with some ECNs actively advertising that they do not allow Last Look liquidity providers on their platforms. However it does   protect market makers from more aggressive behavior and ultimately, prices offered on Last Look platforms may have lower spreads than on non-Last Look markets.   This means that  market participants who are not latency arbitraging the market maker are not penalized in the prices they receive, but may still face rejection of some of their trades. For direct pricing streams, employing trade rejection over Last Look also allows market makers to offer more liquidity to traders than they could without such protection. The disadvantage for traders is that they no longer have guaranteed fills when they go to market and, more pertinently, the rejected trades generally are the ones that have gone in their favor, at least over the Last Look time interval.

%Picking-off stale quotes  has been extensively studied in the equity literature. In the 1990's professional traders known as ‘SOES bandits’   capitalized on short-term momentum in stock prices by executing trades on stale quotes through Nasdaq's Small Order Execution System, see \cite{harris1998trading}.   \cite{foucault2003market} show that market makers's stale quotes  are picked-off because monitoring news arrival in the market is costly to the market maker.

FX market makers  are exposed to being picked-off if they do not update their quotes quickly. However, some FX brokers willingly allow trades on stale quotes (e.g. in over-the-counter and quote streaming set-ups), but this is not a free option available to liquidity takers.  FX brokers `charge' for the  option, to be hit/lifted on stale quotes, by rejecting trades   through the Last Look mechanism -- see \cite{copeland1983information} who discuss firm quotes as free options given to market takers.

Our paper and the contemporaneous work of \cite{OomenLastLook} are the first to examine FX spot price spreads with and without Last Look on the transaction, see also  and \cite{OomenAgr}.  We model latency arbitrage by allowing the market taker to trade on a stale quote, which in FX markets is a quote that is no longer valid either because  the liquidity provider has sent an updated quote, or because the market has moved since the liquidity provider made the price. We consider the value to the liquidity provider of having the option to reject a quote over the Last Look interval given that there is a target rejection threshold which affects all traders. % for a particular client or ECN.

We assume that market makers or brokers are risk-neutral and competition drives spreads so that expected profits from dealing in the FX market are zero.    Brokers cannot observe the type of trade they are facing, so rejection affects all traders: LAs, who only trade on stale quotes which produce an immediate risk-less positive profit,  and STs,  who  are not (latency) arbitraging the market.  The brokers reject trades that generate losses greater than a predetermined threshold. These losses are calculated ex-post using the price update after the trader executed his order. As expected, the right to cancel trades over the rejection window  caps brokers' losses,  so everything else equal,  quoted spreads decrease.

We show that in markets where there is price momentum, i.e. price revisions are positively correlated (such as what occurs when there is spamming in the market), the broker's rejection rule is more effective at singling out latency arbitrage trades. Thus, everything else equal, when there is momentum in prices, spreads are tighter. Conversely, when price revisions are negatively correlated, prices mean revert and it is more difficult for the broker to single out loss-leading trades whose counterparty are LAs, hence spreads widen.

Tighter spreads  have different effects on market participants. LAs have more opportunities to attempt an arbitrage (on stale quotes),  because spreads are tighter and therefore LAs can take advantage of smaller price movements,  but they also face higher rejection rates and overall they are worse off in markets with the Last Look option. On the other hand, the STs  benefit from lower spreads, but face rejection of their most profitable trades, so depending on market parameters, how STs account for the foregone profits of rejected trades, and other rejection costs, they will seek or avoid trading in venues with Last Look.

Is there an optimal spread? In a market where there is only one venue to trade, the risk-neutral brokers are indifferent  between making markets with or without the Last Look option because spreads are determined by the zero expected profit condition. On the other hand, when STs account for rejection costs, our results show that there is an optimal spread that minimizes the STs' costs of executing round-trip trades. In addition to the spread that STs pay when executing trades, the rejection costs include:  forgone profits; immediacy costs which are high if the ST requires immediate and guaranteed execution; the additional cost arising from returning to the market to execute the trade; and, arguably,  the potential exposure to front-running costs.

When there is more than one FX venue, traders migrate to those where they are better off: LAs migrate to venues where the expected profit of a round-trip trade is highest, and STs to those where the expected cost of a round-trip trade is lowest. Quoted spreads depend on a number of factors which are specific to each venue: rejection rule, and proportion of LAs. We show that there is an equilibrium region where there are no incentives to migrate and also examine cases in which the equilibrium region is a corner solution where only one  FX venue survives, i.e. one venue attracts all order flow from both types of traders as well as all market makers.

In particular we discuss the two-venue case where in one venue brokers employ the Last Look option, while the other venue does not allow market makers to enforce Last Look. We show that there are two distinctive regions (defined by pairs of numbers of LAs and STs trading in each venue), where traders have incentives to migrate and the equilibrium reached is either both venues coexist or only one survives. When the market's starting point is in the region where the venue \emph{with} Last Look starts off with a low proportion of LAs, then equilibrium is reached when \emph{all} traders exit the venue without Last Look, i.e. all order flow occurs in the venue that employs a rejection rule.

The other region is one where  the venue \emph{without} Last Look starts with a low proportion of LAs (so the venue with Last Look has a high proportion of LAs). In this case,  LAs find it optimal to migrate to the venue without Last Look. Thus the brokers in the venue without Last Look increase spreads to recover the losses to LAs, but this increases the STs' trading costs, so some of them migrate to the venue with Last Look, but do so at a rate lower  than that at which LAs flow into the venue without Last Look. Equilibrium is reached at a point where both venues coexist (apart from very extreme cases  where the starting point is one where most LAs are concentrated in one venue). Interestingly, when both venues coexist  the Last Look venue does not always quote the lowest spread.
%Interestingly, when both venues coexist it is not always the case that the Last Look venue quotes the lowest spread.

When traders switch between venues they incur a fixed cost. In the over-the-counter  FX market, this fixed cost includes `reputational' costs to build a relationship with the market maker, and software set-up costs to connect to other exchanges and counterparties. We show that when migration costs are very low, the market settles to an equilibrium where only one venue survives and this outcome depends on the starting point, but in most cases all traders migrate to the venue which enforces Last Look.

Finally, the Last Look feature in FX markets is in the spotlight of regulators and financial authorities. This paper provides a framework to analyze the provision of liquidity and immediacy in a market where some venues enforce rejection of trades.   For example, in a recent consultation document, the Bank of England (joint with the HM Treasury and the Financial Conduct Authority) express the concern raised by some market participants who ``have argued that such practices may also incentivize market makers
to delay a decision for longer periods in order to observe market moves and reject unprofitable trades or even engage in
front-running of orders.", \cite{BoE}. This paper provides a framework to understand how FX venues with different rejection rules set spreads to the market, thus providing a price for immediacy in the market, and how market participants choose venues for their trades.

The remainder of this paper is organized as follows. In Section \ref{sec: optimal spread no LL} we present the model for the dynamics of exchange rates and show how a risk-neutral broker sets optimal spreads in a market consisting of LAs and STs. In Section \ref{sec: slow spreads with LL} we develop the model further to allow the broker to enforce the Last Look option to cancel trades ex-post and determine the optimal spread quoted in the market. In Section \ref{sec: optimal spread for ST} we model how STs impute costs to rejected trades  and compute the optimal spread (hence the rejection threshold) that minimizes the costs that STs are exposed to. In Section \ref{sec: equilibrium venues} we discuss how the market reaches equilibrium when there is more than one FX venue.  Finally, Section \ref{sec: conclusions} concludes and proofs are collected in the Appendix.

\section{Optimal Spreads without Last Look}\label{sec: optimal spread no LL}

We assume that brokers  are risk-neutral and operate in a competitive market, so that the expected profits of round-trip trades  is zero. In addition,  brokers do not incur any fees or other variable costs to operate in the market.  The midprice, i.e.  the exchange rate between two currencies, follows a stochastic process which is observed by all market participants. There are three time markers $i=0,1,2$,  the midprice is denoted by $P_i$,   $P^a_i$ denotes the ask, $P^b_i$ the bid. The spread is given by  $\Delta= P^a_i- P^b_i \geq 0 $ and is determined by the brokers' zero-expected profit condition.  Point $i=0$ corresponds to the initial time when the broker posts a quote, $i=1$ corresponds to the time when the broker updates the quote, and $i=2$ corresponds to the time at which the broker decides whether to accept or reject the trade if there is a Last Look option. All trades are of one unit.

Throughout this paper the spread arises from the brokers' need to break-even when trading with market participants who arbitrage stale quotes.\footnote{When the trader hits the liquidity provider's most up-to-date quote, but the market has moved,  may also be considered as a trade on a stale quote.} In general, the difference between the bid and ask is explained by the various risks that the market maker or broker faces when intermediating trades, e.g.  adverse selection and  inventory risk, see for instance \cite{glostenMilgrom}, \cite{GrossmanMiller1988}, \cite{FrankDeJong2009}.   Here, we focus on the effect that LAs have on spreads, and one could include these other effects, which would widen the spreads.

Innovations in the midprice are given by
\[
P_{i+1}-P_i =\sigma\,Z_{i+1}\,,
\]
where $\sigma$ is a positive constant, the price revisions $Z_1$ and $Z_2$ are correlated standard normal random variables, with correlation coefficient $\rho$, and we write,
\[
\begin{pmatrix}
Z_1 \\
Z_2
\end{pmatrix}
\sim
\mathcal N\left(
\begin{pmatrix}
0 \\
0
\end{pmatrix},
\begin{pmatrix}
1 & \rho \\
\rho & 1
\end{pmatrix}\right)\,.
\]
Positive correlation, $\rho>0$,  corresponds to a period of trading where prices are trending up/down, while negative correlation,  $\rho<0$,  corresponds to a time of mean-reversion of prices. Naturally, there is no trend in prices when correlation is zero. In this section the broker does not have the Last Look option to veto trades ex-post, so the second price increment is irrelevant, it will however play an important role when this option is incorporated in Section \ref{sec: slow spreads with LL}.

When there is spamming in the market, i.e. when an LA takes liquidity from multiple venues simultaneously, price updates reflect this type of market activity by moving in the direction of the trade. Consider the case of an LA submitting buy orders over multiple venues  (and possibly from different brokers) simultaneously. Several brokers will then be left with excessive short positions that they must unwind. To do so, the brokers will either take liquidity and thus add to overall buying pressure in the market resulting in upward price movements; and/or adjust their bids (and hence also asks) upwards to entice other traders to offset their short position. The end result is that prices  move upwards and this pressure can persist over multiple periods depending on the size of the total short position the brokers found themselves in. A similar argument follows if the LA submits sell orders over multiple venues simultaneously, resulting in a downward trend in prices. Overall, spamming in the market induces positive correlation between price increments.

All brokers  send   quote updates  at the beginning of every period $i$ and traders decide if they want to trade. The market is populated by two types of traders: STs and LAs. STs do not possess the technology to always observe the updates that the brokers post. LAs have the speed and technology to see, and act on, all quote updates to the market.

The brokers cannot differentiate trader type, but know that a proportion $\alpha\in[0,1]$ of traders are LAs, and know that STs observe the updated quote (at $i=1$) with probability $\beta$. The brokers wish to do business with STs, so they allow all market participants to trade on stale quotes. This may happen in two ways.  i) At time $i=1$ a broker updates her quotes to $P^a_1=P_1+\frac{\Delta}{2}$ and $P^b_1=P_1-\frac{\Delta}{2}$, but will honor trades at the stale quotes $P^{a,b}_0$. ii) At time $i=1$ the market has moved and a broker did not update her quotes and will honor trades at the stale quotes $P^{a,b}_0$. In the sequel, a trade on a stale quote refers to either one of these cases.  Throughout we refer to $\alpha$ as the proportion of traders, but could also be interpreted as the ratio of latency arbitrage trades to the total number of trades in the FX market.

An ST always trades at the quotes he sees, whether  stale or not.  LAs will always trade at the most favorable quote for him, stale or new. Thus, brokers are exposed to `latency losses' when trading with LAs who take advantage of stale quotes. In equilibrium, brokers set the spread $\Delta$ to recover these losses.

\subsection{Optimal spread}

The broker determines the quoted  spread so that the expected profit of each round-trip trade, in any given period, is zero. When the broker enters a position at time $i=1$ the expected profit of the round-trip is calculated using the price at which the first leg of the trade is entered, and the   price of the leg to close out the position. The former depends on whether the broker accepted the trade on a stale or updated quote. The latter is either $P^b_1$, if first leg was a sell, or $P^a_1$, if the first leg was a buy.

Figure \ref{fig:three cases for LA} shows quote updates. The size of the spread  and the midprice change determine if the LA trades on a stale quote. Cases I and II show arbitrage opportunities executed by LAs. Panel (c) depicts the cases where the midprice change is small enough to preclude latency losses to the broker.

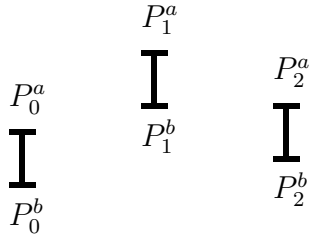
\begin{figure}[t!]

\begin{center}

\begin{minipage}{0.45\textwidth}
\linethickness{2pt}
\begin{picture}(100,120)(0,-40)
\put(0, 50){$P_0^a$}
\put(5, 40){\line(0,-1){20}}
\put(0, 40){\line(1,0){10}}
\put(0, 20){\line(1,0){10}}
\put(0, 5){$P_0^b$}

\put(50, 80){$P_1^a$}
\put(55, 70){\line(0,-1){20}}
\put(50, 70){\line(1,0){10}}
\put(50, 50){\line(1,0){10}}
\put(50, 35){$P_1^b$}

\put(100, 60){$P_2^a$}
\put(105, 50){\line(0,-1){20}}
\put(100, 50){\line(1,0){10}}
\put(100, 30){\line(1,0){10}}
\put(100, 15){$P_2^b$}
\end{picture}
\\
{(a) Case I: $P_1^b>P_0^a$}
\end{minipage}
\begin{minipage}{0.45\textwidth}
\caption{A sequence of bid-ask price updates. The first quote is at $i=0$, the updated quote at $i=1$, and the third update at $i=2$ is used to determine the Last Look rejection. \label{fig:three cases for LA} \vspace{8em}}
\end{minipage}
\\
\begin{minipage}{0.45\textwidth}
\linethickness{2pt}
\begin{picture}(100,120)(0,-40)
\put(0, 50){$P_0^a$}
\put(5, 40){\line(0,-1){20}}
\put(0, 40){\line(1,0){10}}
\put(0, 20){\line(1,0){10}}
\put(0, 5){$P_0^b$}

\put(50, 20){$P_1^a$}
\put(55, 10){\line(0,-1){20}}
\put(50, 10){\line(1,0){10}}
\put(50, -10){\line(1,0){10}}
\put(50, -25){$P_1^b$}

\put(100, 60){$P_2^a$}
\put(105, 50){\line(0,-1){20}}
\put(100, 50){\line(1,0){10}}
\put(100, 30){\line(1,0){10}}
\put(100, 15){$P_2^b$}
\end{picture}
\\
(b) Case II: $P_1^a<P_0^b$
\end{minipage}
\begin{minipage}{0.45\textwidth}
\linethickness{2pt}
\begin{picture}(100,120)(0,-40)
\put(0, 50){$P_0^a$}
\put(5, 40){\line(0,-1){20}}
\put(0, 40){\line(1,0){10}}
\put(0, 20){\line(1,0){10}}
\put(0, 5){$P_0^b$}

\put(50, 55){$P_1^a$}
\put(55, 45){\line(0,-1){20}}
\put(50, 45){\line(1,0){10}}
\put(50, 25){\line(1,0){10}}
\put(50, 10){$P_1^b$}

\put(100, 60){$P_2^a$}
\put(105, 50){\line(0,-1){20}}
\put(100, 50){\line(1,0){10}}
\put(100, 30){\line(1,0){10}}
\put(100, 15){$P_2^b$}
\end{picture}
\\
(c) Case III:
$P_1-P_0 \in [- \Delta, \Delta]$
\end{minipage}
\end{center}
\end{figure}

To determine the broker's optimal spread we first look at the trades where the counterparty is an ST and then when it is an LA.

\textbf{Trading with STs}

Recall that the  ST sees the updated quote at $t=1$ with probability $\beta$.
\begin{itemize}
\item If  the ST receives the updated quote, then the  profit to the broker of a round-trip trade is the spread $\Delta$.
\item If the ST does not receive the updated quote, and therefore trades on the stale quote, the profit to the broker of a round-trip trade is
    \[
     P_1-P_0+\Delta \,.
    \]
    Clearly, when the  ST trades on a stale quote it will be, unbeknownst to him,  at a profit or at a loss.
\end{itemize}

\textbf{Trading with LAs}

Trades on stale quotes result from options provided by the broker to liquidity takers who exercise them. In equity markets, firm quotes in the limit order book are `free' options given to liquidity takers to pick-off stale quotes. In FX markets  with Last Look these options are not free because the broker  may reject trades.

Here we list the midprice revisions which expose the broker to latency losses:
 \begin{itemize}
 \item \underline{Case I}: If $P_1^b>P_0^a$, the LA executes a buy at the stale quote, followed by (an instant later) a sell at the updated quote, and  the LA receives a net profit of
     \[
     \left( P_1^b-P_0^a\right)_+\,,
     \]
    where $(x)_+=\max(0,\, x)$.
 \item \underline{Case II}: If $P_1^a<P_0^b$, the LA executes a sell at the stale quote, followed by (an instant later) a buy at the updated quote, and  the LA receives a net profit of
     \[
     \left( P_0^b-P_1^a\right)_+\,.
     \]

\end{itemize}
And midprice revisions which do not lead to latency losses:
 \begin{itemize}
 \item \underline{Case III}: If $P_1-P_0 \in [- \Delta, \Delta]$, the LA cannot profit from a round-trip trade and therefore makes no trades.
\end{itemize}

Putting the above scenarios together, the broker's expected profits  stemming from trading with  STs and LAs, respectively, are:
\begin{equation}\label{eqn: profits from STs no LL}
\Omega_{ST}= \beta \,\Delta+ (1-\beta)\,\EE_0[P_1-P_0+\Delta]\,,
\end{equation}
and
\begin{equation}\label{eqn: profits from LAs no LL}
\Omega_{LA}=\EE_0\left[\left( P_1^b-P_0^a\right)_+ \,
+\,
\left( P_0^b-P_1^a\right)_+\right]\,,
\end{equation}
%when trading with STs and LAs respectively.
where $\EE_0$ is the expectation operator conditioned on information at time $i=0$.

Thus, the broker's expected profits at time $i=0$ are given by
\begin{equation}\label{eqn: total profits no LL}
\Omega = (1-\alpha)\,\Omega_{ST}\,-\,
\alpha\,\Omega_{LA}\,.
\end{equation}

Next, we determine the balancing equation that the spread must satisfy. Recall the broker is risk-neutral and does not incur any fees or other variable costs to make markets. Thus, in equilibrium, the broker  sets a spread where the expected profit is zero. We seek the optimal spread by  conditioning on type of trader.

First, due to the martingale nature of the price movement  over the first period,  the expected profit from trading with  STs is
\[
\Omega_{ST} = \beta \,\Delta+ (1-\beta)\,\EE_0[P_1-P_0+\Delta] = \Delta\,.
\]

Second, we can rewrite the expected profits from trading with  LAs as follows:
\begin{align*}
\Omega_{LA}
=&\, \EE_0\left[\left( P_1^b-P_0^a\right)_+ + \left( P_0^b-P_1^a\right)_+\right] \\
=&\, \EE_0\Big[\left( P_1-P_0-\Delta\right)_+ + \left( P_0-P_1-\Delta\right)_+\Big] \\
=&\, 2\,\EE_0\Big[\left( P_1-(P_0+ \Delta)\right)_+ \Big]\,.
\end{align*}
In this form, we can interpret the expected profits from trading with  LAs  as two call options on the midprice struck at the arrival price plus the spread, or alternatively as a single strangle option at the same strike. Since we assume prices are arithmetic, and increments are symmetric, these two options have the same value.

\begin{proposition}\textbf{Losses to Latency Arbitrageurs without Last Look.}
\label{thm: loss to LA without lastlook}
 The broker's expected losses to LAs  are given by
%\begin{equation}
%\Omega_{LA} = 2\,\sigma\,\phi\left(\frac{\Delta}{\sigma}\right)
%+\Delta\,\left[\Phi\left(\frac{\Delta}{\sigma}\right)-\Phi\left(-\frac{\Delta}{\sigma}\right)\right] \,,
%\end{equation}
%
\begin{equation}
\Omega_{LA} = 2\,\sigma\,\phi\left(\frac{\Delta}{\sigma}\right)
-2\,\Delta\,\Phi\left(-\frac{\Delta}{\sigma}\right) \,,
\end{equation}
where $\phi(\cdot)$ and $\Phi(\cdot)$  denote the standard normal pdf and cdf, respectively.
\end{proposition}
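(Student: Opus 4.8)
The plan is to start from the representation already derived in the excerpt, namely $\Omega_{LA} = 2\,\EE_0\big[(P_1 - (P_0+\Delta))_+\big]$, and substitute the model dynamics $P_1 - P_0 = \sigma\,Z_1$ with $Z_1 \sim \mathcal N(0,1)$. This reduces the claim to evaluating the single Gaussian expectation $\EE[(\sigma Z_1 - \Delta)_+]$, i.e.\ the time-value of a call on an arithmetic (Bachelier) underlying with zero drift, strike a distance $\Delta$ above the current level, and volatility $\sigma$.

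Next I would write the expectation as an integral over the region where the payoff is positive, $\{z : \sigma z - \Delta > 0\} = \{z > \Delta/\sigma\}$, obtaining
\[
\EE[(\sigma Z_1 - \Delta)_+] = \int_{\Delta/\sigma}^{\infty} (\sigma z - \Delta)\,\phi(z)\,dz
= \sigma \int_{\Delta/\sigma}^{\infty} z\,\phi(z)\,dz \;-\; \Delta \int_{\Delta/\sigma}^{\infty} \phi(z)\,dz\,.
\]
The two integrals are then handled by the standard identities $\int_a^\infty z\,\phi(z)\,dz = \phi(a)$ (which follows from $\phi'(z) = -z\,\phi(z)$) and $\int_a^\infty \phi(z)\,dz = 1 - \Phi(a) = \Phi(-a)$ (using $\phi(-z)=\phi(z)$). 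Applying these with $a = \Delta/\sigma$ gives $\EE[(\sigma Z_1-\Delta)_+] = \sigma\,\phi(\Delta/\sigma) - \Delta\,\Phi(-\Delta/\sigma)$, and multiplying by $2$ yields the stated formula.

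There is essentially no hard step here; the only things with content are (i) checking the interchange of expectation and integration, which is immediate since $|(\sigma Z_1-\Delta)_+| \le \sigma|Z_1| + \Delta$ is integrable, and (ii) the symmetry argument used in the excerpt to collapse the strangle into twice a single call, which is legitimate because $Z_1$ is symmetric about $0$, so $(P_1 - P_0 - \Delta)_+$ and $(P_0 - P_1 - \Delta)_+ = (-(P_1-P_0) - \Delta)_+$ have the same law and hence the same expectation. As a sanity check: when $\Delta \to 0$ the formula returns $\Omega_{LA} \to 2\sigma\,\phi(0) = \sigma\sqrt{2/\pi} = \EE|\sigma Z_1|$, the expected absolute price move, which is the correct no-spread limit; and $\partial_\Delta \EE[(\sigma Z_1-\Delta)_+] = -\Phi(-\Delta/\sigma) < 0$, so $\Omega_{LA}$ is decreasing in $\Delta$, consistent with the economic discussion that a wider spread shrinks latency losses.
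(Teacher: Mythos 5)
Your proposal is correct and follows essentially the same route as the paper: the symmetry of $Z_1$ collapses the strangle into $2\,\EE_0[(\sigma Z_1-\Delta)_+]$, and the remaining truncated-Gaussian integral is evaluated directly via $\int_a^\infty z\,\phi(z)\,dz=\phi(a)$ and $\int_a^\infty\phi(z)\,dz=\Phi(-a)$, exactly as in the appendix. The integrability and monotonicity remarks are fine additions but not needed beyond what the paper does.
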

\begin{proof}
See \ref{sec: proof of loss to LA without lastlook}.
\end{proof}

In equilibrium, the broker must break-even so the losses she incurs from trading with LAs must be offset by the gains obtained from trading with STs. Thus, the broker must quote a spread to the market so that $\Omega=0$, so using  \eqref{eqn: total profits no LL}, the zero-expected profit condition is $\alpha \,\Omega_{LA} = (1-\alpha)\,\Omega_{ST}$. This is shown in the following corollary.

\begin{corollary}\textbf{Optimal Spread Balancing Equation without Last Look.} The risk-neutral broker charges a spread $\Delta^*=\sigma\,x^*$, where $x^*$ is a solution of the non-linear equation
\begin{equation}
\phi\left(x\right)-x\,\Phi\left(-x\right) =\frac{1-\alpha}{2\,\alpha}\,x\,.
\label{eqn:balance no lastlook}
\end{equation}
\end{corollary}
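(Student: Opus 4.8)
The plan is to obtain the balancing equation as a direct algebraic consequence of the brokers' zero-expected-profit condition, using the two facts already established above: that $\Omega_{ST}=\Delta$ by the martingale property of the first price increment, and that $\Omega_{LA}=2\sigma\,\phi(\Delta/\sigma)-2\Delta\,\Phi(-\Delta/\sigma)$ by Proposition~\ref{thm: loss to LA without lastlook}.

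First I would recall from \eqref{eqn: total profits no LL} that $\Omega=(1-\alpha)\,\Omega_{ST}-\alpha\,\Omega_{LA}$, so that the competitive, risk-neutral break-even requirement $\Omega=0$ is equivalent to $(1-\alpha)\,\Omega_{ST}=\alpha\,\Omega_{LA}$. Substituting $\Omega_{ST}=\Delta$ and the closed form for $\Omega_{LA}$ yields
\[
(1-\alpha)\,\Delta=\alpha\left(2\sigma\,\phi\!\left(\tfrac{\Delta}{\sigma}\right)-2\Delta\,\Phi\!\left(-\tfrac{\Delta}{\sigma}\right)\right).
\]
I would then divide through by $2\alpha\sigma$ (taking $\alpha\in(0,1]$; the case $\alpha=0$ is the trivial one with no LAs and zero spread) and set $x=\Delta/\sigma$. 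The left-hand side becomes $\frac{1-\alpha}{2\alpha}\,x$ and the right-hand side becomes $\phi(x)-x\,\Phi(-x)$, which is precisely \eqref{eqn:balance no lastlook}; the optimal spread is recovered as $\Delta^*=\sigma\,x^*$.

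The only point requiring more than bookkeeping is to verify that \eqref{eqn:balance no lastlook} genuinely pins down $x^*$, i.e. that a unique positive root exists. I would establish this by monotonicity: set $g(x)=\phi(x)-x\,\Phi(-x)$ and observe that $g(0)=\phi(0)=1/\sqrt{2\pi}>0$, that $g(x)\to 0$ as $x\to\infty$ (since $\phi(x)\to 0$ and $x\,\Phi(-x)\sim\phi(x)\to 0$), and that $g'(x)=-\Phi(-x)<0$, so $g$ is strictly decreasing from a positive value down to $0$; meanwhile $h(x)=\frac{1-\alpha}{2\alpha}\,x$ is strictly increasing from $0$ when $\alpha<1$, so the graphs of $g$ and $h$ cross at exactly one point $x^*>0$ (when $\alpha=1$ the right-hand side vanishes and $x^*$ is the unique root of $g$). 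I expect this monotonicity check to be the main obstacle, though it is still routine; everything else is substitution and rescaling.
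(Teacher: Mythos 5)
Your derivation is exactly the paper's: the corollary follows by setting $\Omega=(1-\alpha)\,\Omega_{ST}-\alpha\,\Omega_{LA}=0$, substituting $\Omega_{ST}=\Delta$ together with the closed form for $\Omega_{LA}$ from Proposition~\ref{thm: loss to LA without lastlook}, and rescaling via $x=\Delta/\sigma$. The existence/uniqueness discussion you append is not needed for the corollary itself and is handled separately in the paper as Proposition~\ref{prop: unique finite solution for spread no last look}; the only slip there is your remark about $\alpha=1$, where no finite root exists because $\phi(x)-x\,\Phi(-x)>0$ for every finite $x$, consistent with the paper's ``if and only if $\alpha\in[0,1)$'' statement.
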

\begin{proof}
Setting the broker's expected profits to zero $\Omega=(1-\alpha)\,\Omega_{ST} - \alpha\,\Omega_{LA}=0$, and rearranging, leads directly to the above balancing equation. \qed
\end{proof}
Moreover, the proposition below shows that there is a unique optimal spread where \eqref{eqn:balance no lastlook} holds.
\begin{proposition}
\label{prop: unique finite solution for spread no last look}
There exists a unique finite solution $x\in[0,+\infty)$ to the non-linear equation \eqref{eqn:balance no lastlook} if and only if $\alpha\in[0,1)$.
\end{proposition}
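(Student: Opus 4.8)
The plan is to reduce the balancing equation to an intersection problem for a single scalar function: set $g(x) := \phi(x) - x\,\Phi(-x)$ on $[0,+\infty)$ and show that $g$ is continuous, strictly decreasing, and strictly positive, so that \eqref{eqn:balance no lastlook} asks where the graph of $g$ meets a ray through the origin with slope $c := \frac{1-\alpha}{2\alpha}$.

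First I would record two elementary facts about $g$. From the derivation preceding Proposition~\ref{thm: loss to LA without lastlook} one has $\Omega_{LA} = 2\sigma\,\EE_0[(Z_1 - \Delta/\sigma)_+]$, hence $g(x) = \EE[(Z-x)_+]$ with $Z$ standard normal; therefore $g(x) > 0$ for every finite $x \ge 0$ because $\PP(Z > x) > 0$, and $g(0) = \phi(0) = 1/\sqrt{2\pi}$. Second, using $\phi'(x) = -x\,\phi(x)$ and $\frac{d}{dx}\Phi(-x) = -\phi(x)$, a one-line computation gives $g'(x) = -x\phi(x) - \Phi(-x) + x\phi(x) = -\Phi(-x) < 0$ on $(0,\infty)$, so $g$ is strictly decreasing; in particular $0 < g(x) \le g(0) = 1/\sqrt{2\pi}$ for all $x \ge 0$, so $g$ is bounded.

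Next I would split on the value of $\alpha$. For $\alpha = 1$ equation~\eqref{eqn:balance no lastlook} reads $g(x) = 0$, which by strict positivity of $g$ has no solution in $[0,\infty)$; since the only admissible value excluded from $[0,1)$ is $\alpha = 1$, this already gives the ``only if'' direction. For $\alpha \in (0,1)$ put $c := \frac{1-\alpha}{2\alpha} \in (0,\infty)$ and $h(x) := g(x) - c\,x$. Then $h$ is continuous, $h(0) = g(0) > 0$, $h'(x) = -\Phi(-x) - c < 0$ so $h$ is strictly decreasing, and since $g$ is bounded while $c\,x \to +\infty$ we get $h(x) \to -\infty$; the intermediate value theorem together with strict monotonicity yields exactly one root $x^* \in (0,\infty) \subset [0,+\infty)$. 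Finally, for the boundary value $\alpha = 0$ the zero-profit identity $\alpha\,\Omega_{LA} = (1-\alpha)\,\Omega_{ST}$ collapses to $0 = \Omega_{ST} = \sigma x$, i.e. $x = 0$, which is the unique solution; equivalently, writing \eqref{eqn:balance no lastlook} in its cleared-denominator form $2\alpha\big(\phi(x) - x\Phi(-x)\big) = (1-\alpha)\,x$ forces $x = 0$ at $\alpha = 0$.

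The only genuinely delicate point is the bookkeeping at the two endpoints: at $\alpha = 0$ the displayed equation carries $\alpha$ in a denominator and must be read through the zero-profit identity (or its cleared form), and at $\alpha = 1$ one must argue that the absence of a finite root is precisely the failure of existence and not a degenerate coincidence — both issues are settled by the strict positivity and strict monotonicity of $g$ established at the outset. Everything else is routine calculus.
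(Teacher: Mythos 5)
Your proof is correct and takes essentially the same route as the paper's: both reduce \eqref{eqn:balance no lastlook} to the intersection of $f(x)=\phi(x)-x\,\Phi(-x)$ (with $f'(x)=-\Phi(-x)<0$) with a ray through the origin of slope $\tfrac{1-\alpha}{2\alpha}$, and conclude there is exactly one crossing precisely when that slope is positive. Your version merely swaps the paper's convexity/asymptotic-slope argument for a monotone difference plus the intermediate value theorem, and is somewhat tidier at the endpoints ($\alpha=0$ via the cleared-denominator form, $\alpha=1$ via strict positivity of $f$), which the paper's proof treats loosely.
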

\begin{proof}
See \ref{sec: proof of unique finite solution for spread no last look}.
\end{proof}

It is clear that STs bear the costs imposed on the market by the LAs who trade on stale quotes. Figure \ref{fig:Delta Opt No Lastlook} shows a plot of the optimal spread $\Delta^*$ as a function of the percentage $\alpha$ of LAs in the market. As expected, this optimal spread is increasing in $\alpha$.  The diagram stops at  $\Delta^*=2\,\sigma$, however, there is indeed a vertical asymptote at $\alpha=1$; it is simple to see that as $\alpha \to 1$,  the solution  of \eqref{eqn:balance no lastlook} is $x^*\to\infty$.

\begin{SCfigure}[20][!t]
%\begin{center}
\hspace{5em}\includegraphics[width=0.45\textwidth]{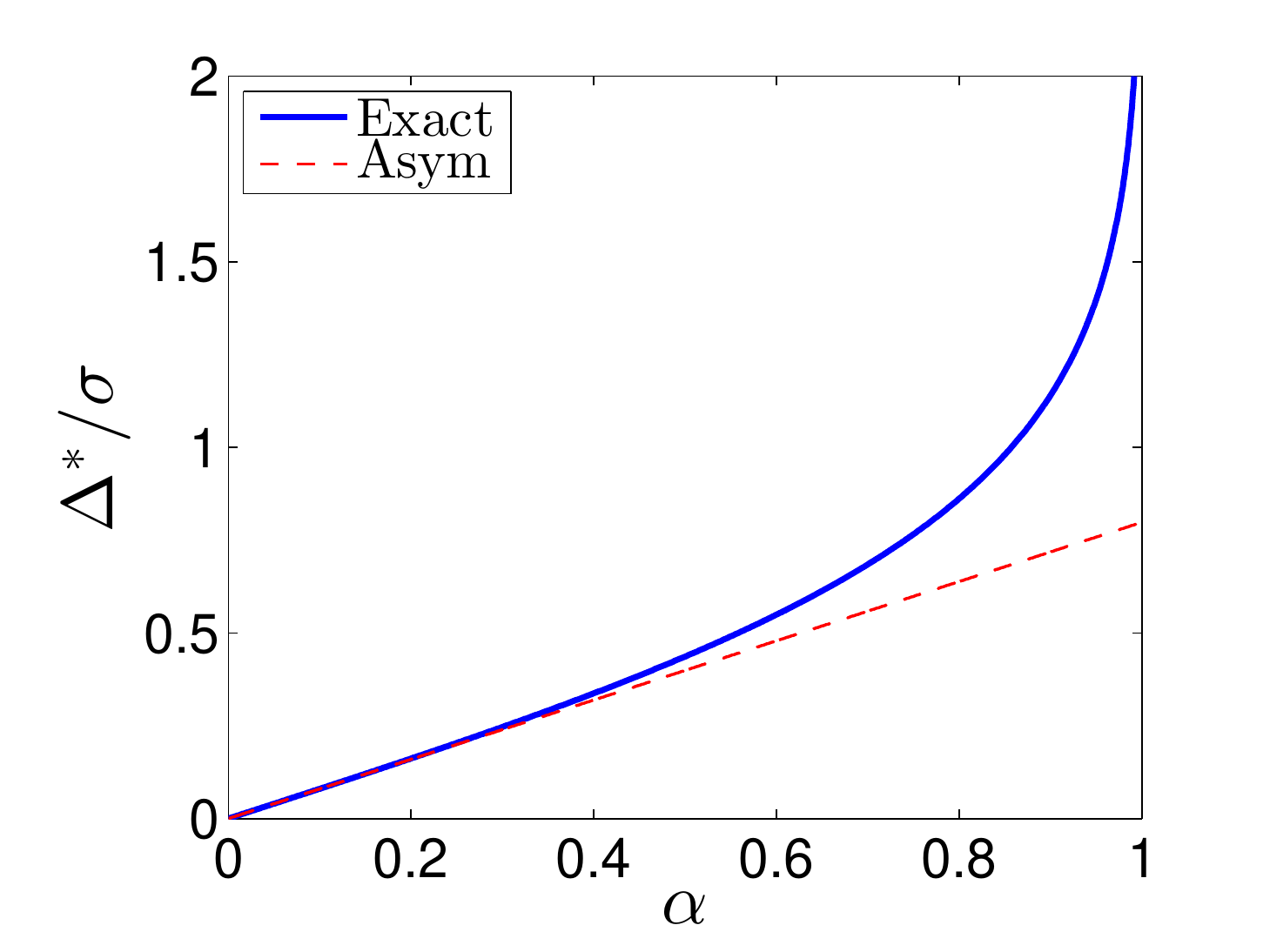}
%\end{center}
\caption{The optimal spread $\Delta^*$ (relative to $\sigma$) which renders the broker's expected losses to LAs equal to her expected gains from STs. Recall that $\alpha$ is the percentage of LAs in the market. \vspace{4em} \label{fig:Delta Opt No Lastlook}}
\end{SCfigure}

\begin{proposition}\textbf{Asymptotic Optimal Spread.}
\label{prop: asymp slow spread no last look}
When the proportion of LAs trading in the market is small, i.e.  $\alpha$ is small, the asymptotic solution of the optimal spread is
\begin{equation}
\frac{\Delta^*}{\sigma}= \sqrt{\frac{2}{\pi}}\;\alpha + o(\alpha)\,,
\label{eqn: spread no last look asymp}
\end{equation}
to first order.
\end{proposition}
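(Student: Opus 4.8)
The plan is to work with the rescaled spread $x^\ast=\Delta^\ast/\sigma$, which by Proposition \ref{prop: unique finite solution for spread no last look} is, for $\alpha\in[0,1)$, the unique nonnegative root of \eqref{eqn:balance no lastlook}; I would first force it to be small and then read off its leading term by Taylor expansion at the origin. Write $g(x):=\phi(x)-x\,\Phi(-x)$, so that \eqref{eqn:balance no lastlook} reads $g(x^\ast)=\tfrac{1-\alpha}{2\alpha}\,x^\ast$. A one-line differentiation gives $g'(x)=-\Phi(-x)$ and $g''(x)=\phi(x)$; in particular $g(0)=\phi(0)=1/\sqrt{2\pi}$, while $0\le g''(x)\le 1/\sqrt{2\pi}$ for all $x$, and for $x\ge 0$ we have $g(x)\le \phi(x)\le 1/\sqrt{2\pi}$ since $x\,\Phi(-x)\ge 0$.

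The first step is an a priori bound. Combining $g(x^\ast)\le g(0)=1/\sqrt{2\pi}$ with the balancing equation gives $\tfrac{1-\alpha}{2\alpha}\,x^\ast\le 1/\sqrt{2\pi}$, hence $x^\ast\le \tfrac{2\alpha}{1-\alpha}\,\tfrac{1}{\sqrt{2\pi}}$. In particular $x^\ast\to 0$ as $\alpha\to 0^+$ and, more precisely, $x^\ast=O(\alpha)$.

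The second step extracts the asymptotics. By Taylor's theorem with Lagrange remainder and the bound on $g''$, $g(x^\ast)=\tfrac{1}{\sqrt{2\pi}}-\tfrac{x^\ast}{2}+R$ with $|R|\le \tfrac{(x^\ast)^2}{2\sqrt{2\pi}}$. Substituting into $g(x^\ast)=\tfrac{1-\alpha}{2\alpha}x^\ast$ and moving the linear terms to one side, the coefficients collapse: $\tfrac{1-\alpha}{2\alpha}+\tfrac12=\tfrac{1}{2\alpha}$, so $\tfrac{x^\ast}{2\alpha}=\tfrac{1}{\sqrt{2\pi}}+R$, i.e. $x^\ast=\tfrac{2\alpha}{\sqrt{2\pi}}+2\alpha R=\sqrt{\tfrac{2}{\pi}}\,\alpha+2\alpha R$. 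Using the a priori bound, the error obeys $2\alpha|R|\le \tfrac{\alpha (x^\ast)^2}{\sqrt{2\pi}}=O(\alpha^3)=o(\alpha)$, and multiplying through by $\sigma$ yields \eqref{eqn: spread no last look asymp}.

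The computation is short and I do not expect a genuine obstacle; the one point requiring care is that the Taylor remainder must be shown to be $o(\alpha)$ rather than merely $o(1)$, which is exactly why the a priori bound $x^\ast=O(\alpha)$ is established before the expansion is used — otherwise the estimate of $R$ would be circular. The clean outcome of the leading constant $\sqrt{2/\pi}$ rests on the cancellation $\tfrac{1-\alpha}{2\alpha}+\tfrac12=\tfrac{1}{2\alpha}$, which also makes clear why one need not separately expand $\tfrac{1-\alpha}{2\alpha}$ in powers of $\alpha$.
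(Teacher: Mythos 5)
Your proof is correct and follows essentially the same route as the paper: both expand the balancing equation \eqref{eqn:balance no lastlook} around $x=0$ and read off the leading coefficient $2\,\phi(0)=\sqrt{2/\pi}$. The only difference is one of rigor rather than method — where the paper posits the ansatz $x^*(\alpha)=c\,\alpha+o(\alpha)$ and matches orders, you first establish the a priori bound $x^*=O(\alpha)$ and then control the Taylor remainder explicitly (exploiting the exact cancellation $\tfrac{1-\alpha}{2\alpha}+\tfrac12=\tfrac{1}{2\alpha}$), which justifies the paper's expansion.
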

\begin{proof}
See \ref{sec: proof of asymp slow spread no last look}.
\end{proof}
The dashed line in  Figure \ref{fig:Delta Opt No Lastlook} shows the asymptotic solution. This asymptotic form  has a connection to the \cite{glostenMilgrom} (GM) model. To see this, note that $\EE\left[|Z|\right]=\sqrt{\frac{2}{\pi}}$, where $Z$ is a standard normal random variable, so that if we identify $\sqrt{\frac{2}{\pi}}\;\sigma\sim \left(\overline{V}-\underline{V}\right)$ where $\overline{V}$, $\underline{V}$ are the two possible price outcomes in the GM model, then from \eqref{eqn: spread no last look asymp}, we have $\Delta^* \sim \alpha \,\left(\overline{V}-\underline{V}\right)$. This result corresponds to the spread in the GM approach when $\alpha$ represents the percentage of informed traders in the market.

\section{Optimal Spread with Last Look}\label{sec: slow spreads with LL}

In this section we employ the same framework as the one developed above. As before,  brokers allow market participants to trade on stale quotes, but brokers have the option of cancelling trades ex-post. Recall that brokers do not know the type of trader they are doing business with, so trades are rejected when the losses to the broker exceed a predetermined threshold which is the same for all brokers. The sequence of events is as follows.

LAs will only trade if midprice updates are such that they can make an immediate risk-less profit (Cases I and II in Figure \ref{fig:three cases for LA}), which requires the first trade of their latency arbitrage to be on the stale quote -- the second leg of their arbitrage is at the current quote $P^{a,b}_1$. STs on the other hand, trade on stale quotes only when they did not receive the updated quote.  In either case, let $P_e$ denote the midprice at which  the trader executed his first trade. Then the broker employs the following ex-post rejection rule at time $i=2$. If the trader sells to the broker, the broker rejects the trade if $-P_e+P_2 \le \xi$ (with the threshold $\xi<0$), while if the trader  buys from the broker, the broker rejects the trade if $P_e - P_2 \le \xi$, i.e. the broker rejects trades when her losses are larger than the threshold $|\xi|$ net of the spread cost that they pick up.\footnote{When the broker receives a buy order, she sells the asset so her cash increases by $P_e$ plus the half-spread, and at period $i=2$ she uses the midprice $P_2$ minus the half-spread to decide if the trade is rejected. Thus, the broker rejects the trade if her losses to this round-trip trade are less than $\xi-\Delta$. So if the trader buys a share on the quote  at $i=1$, then the broker rejects it if $(P_e+\Delta/2) -(P_2-\Delta/2) \le \xi +\Delta$.}

Here we assume that there is only one venue and  the rejection threshold is set by the venue. The choice of threshold  does not affect the brokers' business because, conditioned on the threshold $\xi$, brokers set spreads to break even. In addition, the choice of threshold does  not alter the fraction of LAs and STs that the brokers face because there is only one venue to trade. Later,   in Section \ref{sec: equilibrium venues} we examine in detail what happens when there is more than one venue.

In the following subsection we discuss the ST's costs of round-trip trades conditioned on the fact that they were accepted, and  in Section \ref{sec: optimal spread for ST} we discuss how STs calculate costs of round-trip trades by also imputing a cost to rejected trades.

\subsection{The Slow Trader's Cost}

If the ST receives the updated quote (with probability $\beta$), then a round-trip trade costs him the spread $\Delta$. If he buys (which we assume occurs $50\%$ of the time), his trade will only be \emph{accepted} if $P_e-P_2=P_1-P_2>\xi$. Similarly, if he sells, his trade will only be \emph{accepted} if $P_2-P_e=P_2-P_1>\xi$. In all, the ST's expected cost of a round-trip trade when he receives the updated quote is
\begin{equation}
\Omega_{ST\,|\,\text{updated}}
=\tfrac{1}{2} \, \Delta \, \PP\left[P_1-P_2>\xi\right]
+ \tfrac{1}{2} \, \Delta \, \PP\left[P_2-P_1>\xi\right]
=\Delta\;\Phi\left(-\frac{\xi}{\sigma}\right)\,.
\label{eqn: Omega ST on update}
\end{equation}

If the ST does not receive the updated quote, then a round-trip trade costs him $\left(P_0+\frac{\Delta}{2}\right) - \left(P_1-\frac{\Delta}{2}\right)$ if he buys (then sells), and his trade is accepted only if $P_e-P_2=P_0-P_2>\xi$. Similarly for the case when the trader sells (then buys). In all, the ST's expected cost, given that he does not receive the updated quote, is
\begin{align}
\notag \Omega_{ST\,|\,\text{stale}}
=&\, \tfrac{1}{2} \, \EE\left[\left(P_0-P_1+\Delta\right)\,\II_{\{P_0-P_2>\xi\}}\right]
 + \tfrac{1}{2} \, \EE\left[\left(P_1-P_0+\Delta\right)\,\II_{\{P_2-P_0>\xi\}}\right]
 \\
=&\, \sigma\,\sqrt{\frac{1+\rho}{2}}\;\phi\left(\frac{1}{\sqrt{2(1+\rho)}}\,\frac{\xi}{\sigma}\right)
+\Delta\;\Phi\left(-\frac{1}{\sqrt{2(1+\rho)}}\,\frac{\xi}{\sigma}\right)\,.
\label{eqn: Omega ST on stale}
\end{align}
See  \ref{sec: derive Omega ST on stale} for the detailed computation.

\begin{proposition}\textbf{Cost to Slow Traders with Last Look.}
\label{thm: cost to ST with last look} The cost of a round-trip trade  by an ST when the broker has the Last Look option is
\begin{equation}
\begin{split}
\Omega_{ST}=&\, \sigma\,(1-\beta)\,\sqrt{\frac{1+\rho}{2}}\;\phi\left(\frac{1}{\sqrt{2(1+\rho)}}\,\frac{\xi}{\sigma}\right) \\
&\,+\Delta\left\{\beta\,\Phi\left(-\frac{\xi}{\sigma}\right)+(1-\beta)\,\Phi\left(-\frac{1}{\sqrt{2(1+\rho)}}\,\frac{\xi}{\sigma}\right)\right\}\,.
\end{split}
\label{eqn: Omega ST exact}
\end{equation}
\end{proposition}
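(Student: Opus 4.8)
The plan is to derive \eqref{eqn: Omega ST exact} by conditioning the ST's round-trip cost on the single binary event that matters for him --- whether or not he sees the updated quote at $i=1$ --- and then assembling the two conditional costs, which have already been recorded in \eqref{eqn: Omega ST on update} and \eqref{eqn: Omega ST on stale}. The ST observes the updated quote with probability $\beta$ and trades on the stale quote $P_0^{a,b}$ with probability $1-\beta$; since these events are mutually exclusive and exhaustive, the law of total expectation gives
\[
\Omega_{ST} \;=\; \beta\,\Omega_{ST\,|\,\text{updated}} \;+\; (1-\beta)\,\Omega_{ST\,|\,\text{stale}}\,,
\]
and the whole proof reduces to the substitution of the two known expressions followed by collecting like terms.

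For completeness I would first re-derive the two ingredients. The updated-quote case is immediate: a round trip costs the full spread $\Delta$, and because the first leg is filled at $P_1$ the trade is accepted exactly when $P_1 - P_2 > \xi$ (a buy) or $P_2 - P_1 > \xi$ (a sell); since $P_2 - P_1 = \sigma Z_2$ is $\mathcal N(0,\sigma^2)$ and $\xi<0$, each has probability $\Phi(-\xi/\sigma)$, giving \eqref{eqn: Omega ST on update}. For the stale-quote case one must evaluate $\tfrac12\,\EE[(P_0 - P_1 + \Delta)\,\II_{\{P_0 - P_2 > \xi\}}] + \tfrac12\,\EE[(P_1 - P_0 + \Delta)\,\II_{\{P_2 - P_0 > \xi\}}]$; writing $P_0 - P_1 = -\sigma Z_1$ and $P_0 - P_2 = -\sigma(Z_1+Z_2)$, noting that $W := (Z_1+Z_2)/\sqrt{2(1+\rho)}$ is standard normal with $\mathrm{Corr}(Z_1,W) = \sqrt{(1+\rho)/2}$, and using the elementary facts $\EE[\II_{\{W<k\}}] = \Phi(k)$ and $\EE[W\,\II_{\{W<k\}}] = -\phi(k)$, both symmetric terms collapse to $\sigma\sqrt{(1+\rho)/2}\,\phi(k) + \Delta\,\Phi(k)$ with $k = -\tfrac{1}{\sqrt{2(1+\rho)}}\tfrac{\xi}{\sigma}$; this is \eqref{eqn: Omega ST on stale}, whose detailed computation is in \ref{sec: derive Omega ST on stale}.

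Plugging \eqref{eqn: Omega ST on update} and \eqref{eqn: Omega ST on stale} into the conditioning identity, the lone $\sigma$-term is inherited only from the stale case and therefore carries the prefactor $1-\beta$, while the two $\Delta\,\Phi(\cdot)$ contributions --- one weighted $\beta$ with argument $-\xi/\sigma$, the other weighted $1-\beta$ with argument $-\tfrac{1}{\sqrt{2(1+\rho)}}\tfrac{\xi}{\sigma}$ --- group into the braced factor multiplying $\Delta$. That is precisely \eqref{eqn: Omega ST exact}, completing the proof.

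There is no real obstacle at the level of this proposition; it is pure bookkeeping once the conditional costs are available. The only genuinely computational step lives in \ref{sec: derive Omega ST on stale}, and even there the single point that requires care is the correlation between $Z_1$, which drives the realized profit and loss of a stale trade, and $Z_1 + Z_2$, which drives the $i=2$ rejection decision: their positive dependence through $\rho$ is exactly why the acceptance probability and the $\phi$-coefficient carry the factor $\sqrt{(1+\rho)/2}$ rather than a $\rho$-free constant, and keeping that bookkeeping straight (including that $\xi<0$, so the relevant $\Phi$ and $\phi$ arguments are positive) is all that the argument demands.
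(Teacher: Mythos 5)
Your proof is correct and follows essentially the same route as the paper: Proposition \ref{thm: cost to ST with last look} is exactly the $\beta$/$(1-\beta)$ combination of \eqref{eqn: Omega ST on update} and \eqref{eqn: Omega ST on stale}, which is all the paper's proof does. Your re-derivation of the stale-quote term via the projection of $Z_1$ onto $W=(Z_1+Z_2)/\sqrt{2(1+\rho)}$ and the identity $\EE[W\,\II_{\{W<k\}}]=-\phi(k)$ is just a streamlined, equivalent version of the integration-by-parts computation the paper relegates to \ref{sec: derive Omega ST on stale}.
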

\begin{proof}
This follows immediately from \eqref{eqn: Omega ST on update} and \eqref{eqn: Omega ST on stale}. \qed
\end{proof}

\begin{proposition}\textbf{Probability of a Slow Trader's Execution.}
\label{prop: Prob ST exec} The probability that the ST's trade is executed equals
\begin{equation}\label{eqn: Psi ST}
\Psi_{ST} = \PP[ P_e - P_2 > \xi] = \beta\,\Phi\left(-\frac{\xi}{\sigma}\right)
+(1-\beta)\,\Phi\left(-\frac{\xi}{\sigma\,\sqrt{2(1+\rho)}}\right)\,.
\end{equation}
\end{proposition}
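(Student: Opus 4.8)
The plan is to condition on whether the ST received the updated quote, identify the execution price $P_e$ in each case, and then reduce the Last Look acceptance event to a tail statement about a single Gaussian whose variance can be read off from the covariance structure of $(Z_1,Z_2)$.

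First I would record the acceptance rule implied by the broker's ex-post test: a buyer's trade survives Last Look when $P_e-P_2>\xi$, and a seller's when $P_2-P_e>\xi$. Because the increments $Z_1,Z_2$ are symmetric, $P_e-P_2$ and $P_2-P_e$ have the same law, so these two events are equiprobable; hence, irrespective of the trade direction, the execution probability equals $\PP[P_e-P_2>\xi]$, exactly as in the statement, and it remains only to evaluate this quantity.

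Next I would split on the event that the ST sees the updated quote (probability $\beta$) versus trading on the stale quote (probability $1-\beta$). In the first case $P_e=P_1$, so $P_e-P_2=-(P_2-P_1)=-\sigma Z_2\sim\mathcal N(0,\sigma^2)$, giving $\PP[P_e-P_2>\xi]=\PP[Z_2<-\xi/\sigma]=\Phi(-\xi/\sigma)$. In the second case $P_e=P_0$, so $P_e-P_2=-\sigma(Z_1+Z_2)$; from the given covariance matrix $\operatorname{Var}(Z_1+Z_2)=1+1+2\rho=2(1+\rho)$, hence $P_e-P_2\sim\mathcal N\big(0,\,2(1+\rho)\,\sigma^2\big)$ and $\PP[P_e-P_2>\xi]=\Phi\!\left(-\frac{\xi}{\sigma\sqrt{2(1+\rho)}}\right)$. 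The law of total probability then combines the two cases with weights $\beta$ and $1-\beta$ to yield the stated formula for $\Psi_{ST}$.

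There is no substantive obstacle here: the argument is essentially bookkeeping, and the only points needing a moment's care are (i) collapsing the buy and sell directions into a single probability via symmetry of the price increments, and (ii) reading off $\operatorname{Var}(Z_1+Z_2)=2(1+\rho)$ so that the stale-quote term carries the $\sqrt{2(1+\rho)}$ scaling rather than a factor of $2$ or $\sqrt2$. As an internal consistency check, this $\Psi_{ST}$ is precisely the combined coefficient of $\Delta$ appearing in the cost formula \eqref{eqn: Omega ST exact}, which is exactly what one expects since, conditional on execution, a round trip costs the ST the spread $\Delta$ together with a mean-zero noise term.
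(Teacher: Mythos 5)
Your proof is correct and follows essentially the same route as the paper: condition on whether the ST sees the update (weights $\beta$ and $1-\beta$), identify $P_e$ as $P_1$ or $P_0$, and compute the Gaussian tail probabilities with variances $\sigma^2$ and $2(1+\rho)\sigma^2$. The only difference is that you make the buy/sell symmetry step explicit, which the paper leaves implicit, so nothing substantive is missing.
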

\begin{proof}
See \ref{sec: Proof of Prob ST exec}.
\end{proof}
This probability is independent of the quoted spread because STs are not attempting to latency arbitrage the broker by trading on stale quotes.

\subsection{The Latency Arbitrageur's Profit}

The LA uses the same strategy as he did without the Last Look clause. He only trades if, relative to the stale quote, he can make a risk-less and profitable round-trip trade. Thus, whenever the LA executes a trade he  always does the first leg at the bid or ask posted in the previous period, i.e.  $P^{a,b}_0$. However, since the broker rejects trades, the   LA's expected profit  of a round-trip trade is
\begin{equation}\label{eqn: profits from LAs with LL}
\Omega_{LA} = 2\,\EE_0\left[ \; (P_1-P_0-\Delta)_+\,\II_{\{\,P_0-P_2> \xi\, \}}\; \right]\,,
\end{equation}
which is as \eqref{eqn: profits from LAs no LL}, but including  the indicator function $\II_{\{\,P_0-P_2> \xi \, \}}$ to account only for accepted trades.

\begin{proposition}\textbf{Losses to Latency Arbitrageurs with Last Look.}
\label{thm: Losses to LA with last look}
The expected losses that the broker, who employs the Last Look option, incurs when trading with LAs is
\begin{equation}
\Omega_{LA} = 2\,(B(\tDelta)- A(\tDelta)\,\tDelta)\,\sigma\,,
\label{eqn: Omega LA exact}
\end{equation}
where $\tDelta = \frac{\Delta}{\sigma}$,  $\tilde{\xi}=\frac{\xi}{\sigma}$,
\begin{align}
\nonumber A(\tDelta)&:=\PP[\;P_1-P_0>\Delta\,,\,P_0-P_2> \xi\;] \\
&=\, \Phi\left(-\frac{\txi}{\sqrt{2(1+\rho)}}\right)
-\Phi_{\sqrt{\frac{1+\rho}{2}}}\left(\tDelta\;,\;-\frac{\txi}{\sqrt{2(1+\rho)}}\right)\,,
\label{eqn:LA Prob Exec I}
\end{align}
and
\begin{align}
\nonumber B(\tDelta)&:= \EE_0\left[ \; \tfrac{1}{\sigma}(P_1-P_0)\,\II_{\{\,P_1-P_0>\Delta\,,\,P_0-P_2> \xi\,\}}\; \right] \\
&= \,
\phi(\tDelta)\,
\Phi\left(-\frac{\txi+(1+\rho)\tDelta}{\sqrt{1-\rho^2}}\right)
-\sqrt{\frac{1+\rho}{2}}\,\phi\left(\frac{\txi}{\sqrt{2(1+\rho)}}\right)
\Phi\left( -\frac{\txi+2\,\tDelta}{\sqrt{2(1-\rho)}} \right)\,.
\label{eqn:LA Prob Exec II}
\end{align}

\end{proposition}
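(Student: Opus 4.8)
The strategy is to strip the expectation in \eqref{eqn: profits from LAs with LL} down to two standard Gaussian integrals and evaluate each in closed form. Since $P_1-P_0=\sigma Z_1$ and $P_2-P_0=\sigma(Z_1+Z_2)$, and using $(x)_+=x\,\II_{\{x>0\}}$, the integrand equals $(P_1-P_0-\Delta)\,\II_{\{P_1-P_0>\Delta,\;P_0-P_2>\xi\}}$. By linearity of $\EE_0$ this gives $\Omega_{LA}=2\sigma\big(B(\tDelta)-\tDelta\,A(\tDelta)\big)$ with $A(\tDelta)=\PP[P_1-P_0>\Delta,\;P_0-P_2>\xi]$ and $B(\tDelta)=\EE_0[\tfrac{1}{\sigma}(P_1-P_0)\,\II_{\{P_1-P_0>\Delta,\;P_0-P_2>\xi\}}]$ exactly as named in the statement; hence \eqref{eqn: Omega LA exact} follows once $A$ and $B$ are identified with the right-hand sides of \eqref{eqn:LA Prob Exec I}--\eqref{eqn:LA Prob Exec II}.

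The key observation is that the pair $(Z_1,W)$ with $W:=Z_1+Z_2$ is jointly normal with $\mathrm{Var}(Z_1)=1$, $\mathrm{Var}(W)=2(1+\rho)$ and $\mathrm{Cov}(Z_1,W)=1+\rho$, so the standardized variable $\tilde W:=W/\sqrt{2(1+\rho)}$ is standard normal and $(Z_1,\tilde W)$ has correlation $r:=\sqrt{(1+\rho)/2}$. Put $a:=\tDelta$ and $b:=-\txi/\sqrt{2(1+\rho)}$, so that $\{P_1-P_0>\Delta\}=\{Z_1>a\}$ and $\{P_0-P_2>\xi\}=\{\tilde W<b\}$. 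Then $A=\PP[Z_1>a,\;\tilde W<b]=\PP[\tilde W<b]-\PP[Z_1\le a,\;\tilde W<b]=\Phi(b)-\Phi_r(a,b)$, where $\Phi_r$ is the standard bivariate normal cdf with correlation $r$; substituting $a$, $b$, $r$ gives \eqref{eqn:LA Prob Exec I}.

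For $B=\EE[Z_1\,\II_{\{Z_1>a,\;\tilde W<b\}}]$ I would condition on $Z_1$: since $\tilde W\mid Z_1=z\sim\mathcal N(rz,\,1-r^2)$, one gets $B=\int_a^\infty z\,\phi(z)\,\Phi\!\big(\frac{b-rz}{\sqrt{1-r^2}}\big)\,dz$. Writing $z\,\phi(z)=-\phi'(z)$ and integrating by parts, the boundary term at $+\infty$ vanishes and the one at $a$ contributes $\phi(a)\,\Phi\!\big(\frac{b-ra}{\sqrt{1-r^2}}\big)$, leaving $-\frac{r}{\sqrt{1-r^2}}\int_a^\infty\phi(z)\,\phi\!\big(\frac{b-rz}{\sqrt{1-r^2}}\big)\,dz$. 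Completing the square in $z$ shows $\phi(z)\,\phi\!\big(\frac{b-rz}{\sqrt{1-r^2}}\big)=\phi(b)\cdot\frac{1}{\sqrt{2\pi}}\exp\!\big(-\frac{(z-br)^2}{2(1-r^2)}\big)$, so that integral equals $\phi(b)\sqrt{1-r^2}\,\Phi\!\big(\frac{br-a}{\sqrt{1-r^2}}\big)$. Altogether $B=\phi(a)\,\Phi\!\big(\frac{b-ra}{\sqrt{1-r^2}}\big)-r\,\phi(b)\,\Phi\!\big(\frac{br-a}{\sqrt{1-r^2}}\big)$.

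The last step is bookkeeping: with $r^2=(1+\rho)/2$, $1-r^2=(1-\rho)/2$, $a=\tDelta$, $b=-\txi/\sqrt{2(1+\rho)}$, and the identities $\sqrt{2(1+\rho)}\,\sqrt{(1-\rho)/2}=\sqrt{1-\rho^2}$, $br=-\txi/2$ and $\phi(b)=\phi(\txi/\sqrt{2(1+\rho)})$, the expression for $B$ collapses to \eqref{eqn:LA Prob Exec II}, and then $\Omega_{LA}=2\sigma(B(\tDelta)-\tDelta A(\tDelta))$ is \eqref{eqn: Omega LA exact}. I expect the only genuinely delicate points to be the integration by parts together with the Gaussian-product identity used to evaluate $B$, and keeping the arguments of $\Phi$ consistent when passing between the $\sqrt{1-\rho^2}$, $\sqrt{2(1-\rho)}$ and $\sqrt{2(1+\rho)}$ normalizations in the final simplification; everything else is the reduction and routine algebra.
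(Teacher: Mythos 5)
Your proposal is correct and follows essentially the same route as the paper's own proof: split the expectation into the terms $A$ and $B$, evaluate $A$ as a bivariate normal probability, and compute $B$ by conditioning on $Z_1$, integrating by parts via $z\,\phi(z)=-\phi'(z)$ and completing the Gaussian square; the only difference is that you standardize $W=Z_1+Z_2$ and carry the correlation $r=\sqrt{(1+\rho)/2}$, whereas the paper works with the orthogonal decomposition $(1+\rho)Z_1+\sqrt{1-\rho^2}\,Z_1^\perp$, a purely cosmetic reparametrization. Your final bookkeeping correctly reproduces \eqref{eqn:LA Prob Exec I} and \eqref{eqn:LA Prob Exec II}, hence \eqref{eqn: Omega LA exact}.
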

\begin{proof}
See \ref{sec: Proof of Losses to LA with last look}.
\end{proof}
%\left(\right)
%\frac{\Delta}{\sigma}

\subsection{Optimal spread with Last Look}

\begin{figure}[!t]
\begin{center}
\includegraphics[width=0.45\textwidth]{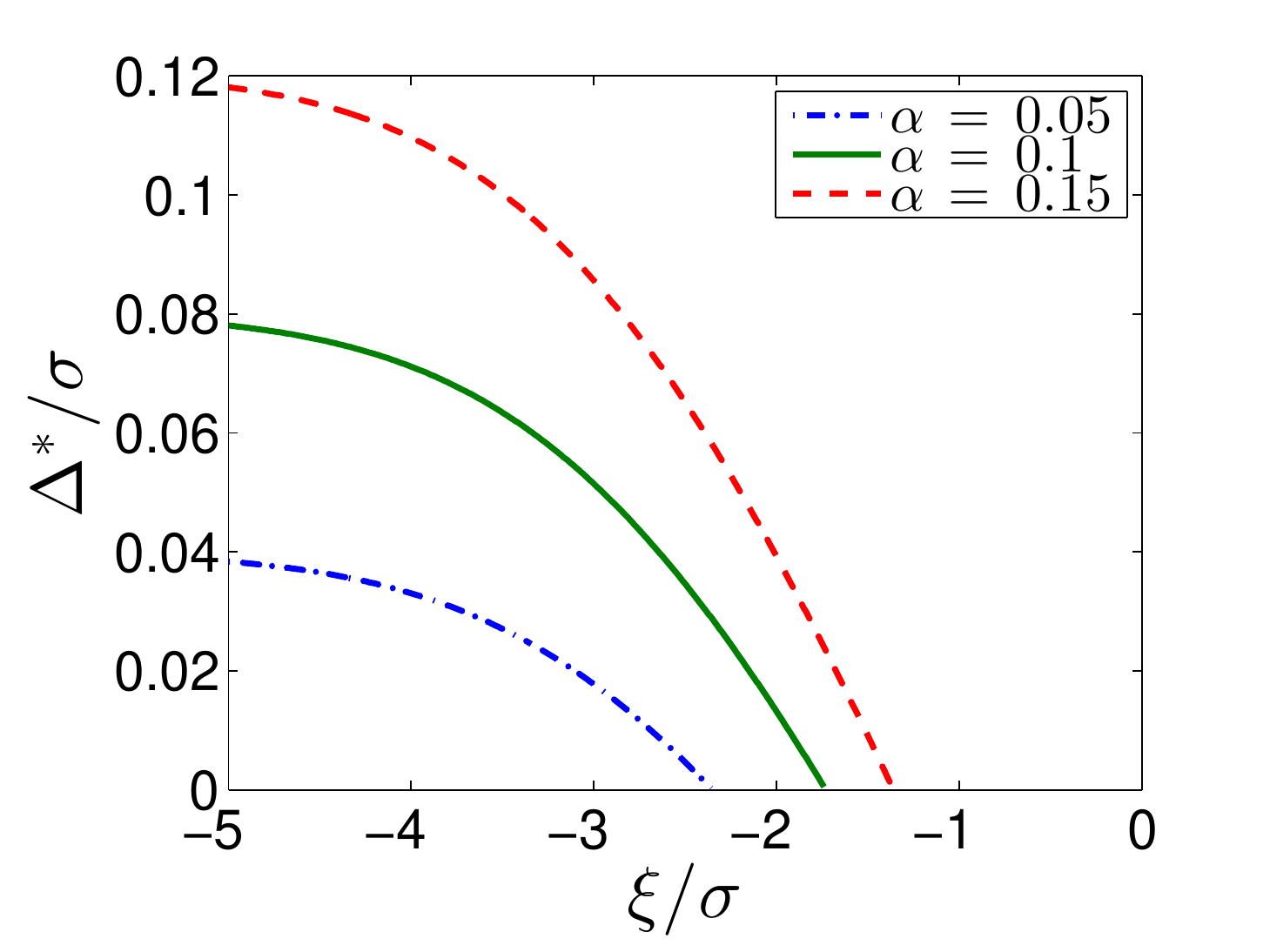}
\includegraphics[width=0.45\textwidth]{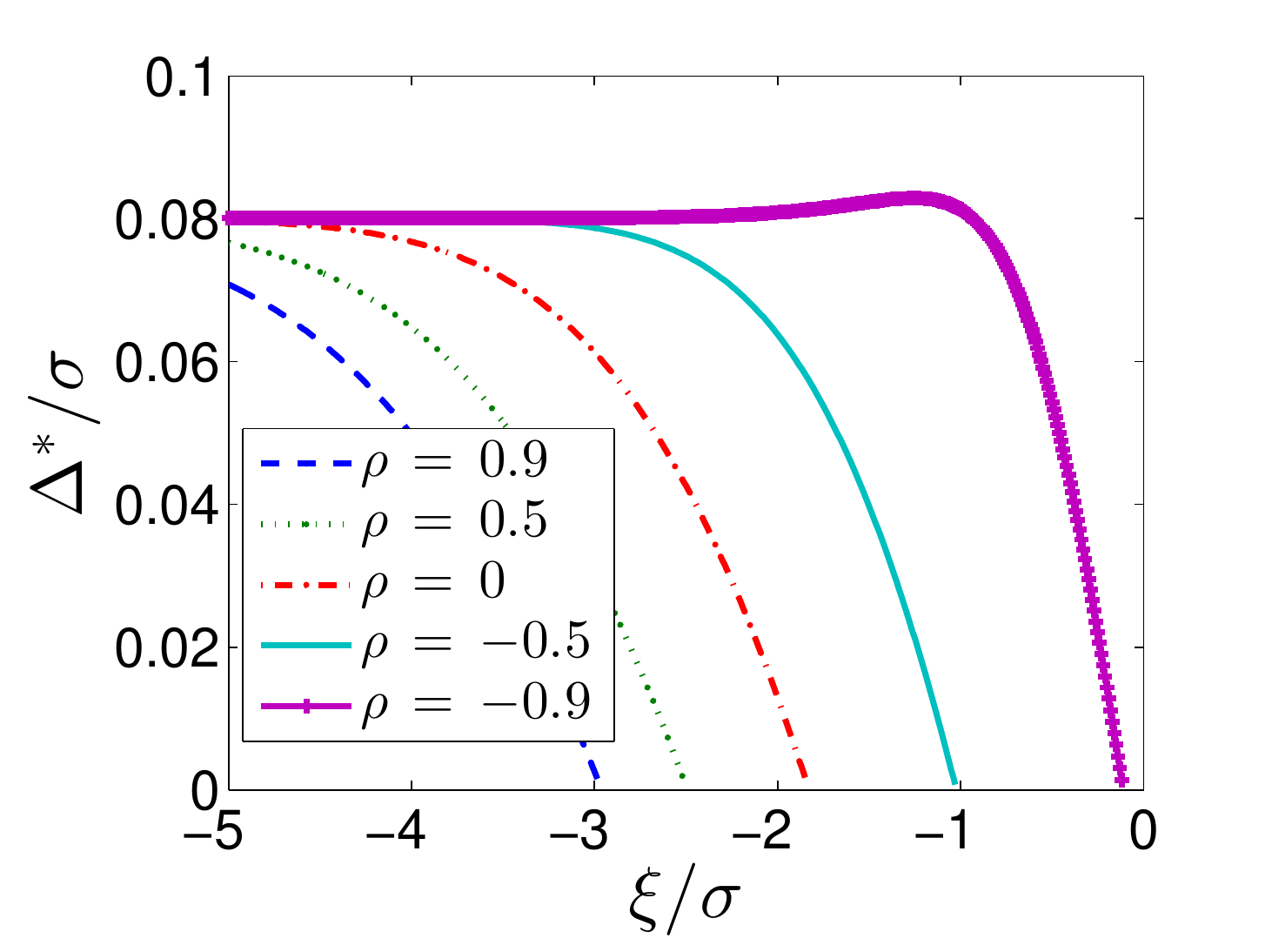}
\end{center}
\caption{Optimal spread $\Delta^*$ (relative to $\sigma$) which renders the broker's expected loss to LAs equal to her expected gains from STs. Recall that $\alpha$ is the percentage of LAs in the market. Here, $\beta=0.8$, in the left panel $\rho=0.5$, and in the right panel $\alpha=0.1$. \label{fig:Delta Opt}}
\end{figure}
Figure \ref{fig:Delta Opt} shows  the optimal spread  as a function of the rejection threshold $\xi$. Recall that the optimal spread is set such that the broker has zero expected profit and satisfies
\begin{equation}
(1-\alpha)\,\Omega_{ST}(\Delta) - \alpha\,\Omega_{LA}(\Delta) =0\,,
\label{eqn: optimal spread balancing equation}
\end{equation}
and all brokers use the same threshold $\xi$, which is determined by the venue.

The left panel shows how the optimal spread (normalized by the volatility parameter $\sigma$) depends on the  percentage $\alpha$  of LAs  trading in the market (correlation is fixed at $\rho=0.5$) and the rejection threshold imposed by the venue. The right panel shows how the optimal spread depends on the  correlation between the shocks to the midprice (percentage of LAs is fixed at $\alpha=0.1$). In both panels the optimal  spread decreases as the cutoff $\xi$ increases. This result reflects the fact that LAs make less profits from the broker because as $\xi$ increases, more trades are rejected -- the broker transfers less losses to the STs by charging a  smaller spread to the market. Furthermore, it is clear that the optimal spread is bounded above  (this bound is obtained when $\xi\to-\infty$) by the optimal spread in the absence of the Last Look option.

The figure also shows that there is a critical cutoff level $\xi^*$ which renders the optimal spread equal to zero, and  as the percentage of LAs increases, the optimal  spread increases -- this is natural, as the broker must recover the costs that the additional LAs impose on her. With the Last Look option,  brokers can remove the cost to STs entirely (i.e. spread is set at zero) because they are able to recover those costs by rejecting trades from the LAs. Note however, that with the Last Look option the costs  of only accepted trades from STs is reduced to zero, but the most  profitable trades executed by the ST are cancelled -- we return to this point in Section \ref{sec: optimal spread for ST} where the ST internalizes the costs of rejected trades.

Finally, we observe that when there are trends or momentum in the market, the Last Look feature singles out a higher proportion of LAs' trades.  For example, as correlation between midprice revisions increases,  when an LA profits in the first increment, this profit will also be reflected in the increment over the second period, which is when brokers enforce the ex-post rejection option,  and hence the rejection rule will pick them out better. The same argument shows that when correlation is negative, prices mean revert, it is more difficult for brokers to use the ex-post price to decide when to reject loss-leading trades executed by LAs, so spreads for a fixed rejection threshold are wider.

Next, we investigate how effective is the Last Look option at rejecting trades from LAs and not those stemming from STs. For this, we need the two results in the following propositions.
\begin{proposition}\textbf{Probability of a Latency Arbitrageur's Execution.} The probability that the LA's trade is executed is
\[
\Psi_{LA} = \PP\Big[ \,(P_0 - P_2) > \xi\,\Big|\, (P_1-P_0)>\Delta\,\Big] =
\frac{A}{\Phi\left(\frac{\Delta}{\sigma}\right)}\,,
\]
where $A(\cdot)$ is given in \eqref{eqn:LA Prob Exec I}.
\end{proposition}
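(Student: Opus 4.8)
The plan is to recognize $\Psi_{LA}$ as nothing more than a conditional probability and to reduce it to the ratio of a joint probability that has already been computed and a one-line Gaussian tail. Concretely, I would write
\[
\Psi_{LA}=\PP\big[\,P_0-P_2>\xi\,\big|\,P_1-P_0>\Delta\,\big]=\frac{\PP[\,P_1-P_0>\Delta\,,\;P_0-P_2>\xi\,]}{\PP[\,P_1-P_0>\Delta\,]}\,,
\]
and then argue that the numerator is exactly $A(\tDelta)$ from \eqref{eqn:LA Prob Exec I} and that the denominator is the standard-normal tail $\PP[\sigma Z_1>\Delta]$.

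The substantive step is just to pin down the two events. Recall from the setup of this section that the LA submits a buy-first round trip precisely on Case~I of Figure~\ref{fig:three cases for LA}, i.e.\ on the event $\{P_1-P_0>\Delta\}$, entering his first leg at the stale ask $P_0^a$ so that the execution midprice is $P_e=P_0$; by symmetry of the Gaussian increments the sell-first leg occurs with the same probability and the same acceptance probability, which is why $\Psi_{LA}$ is unambiguously ``the probability the LA's trade is executed.'' Conditional on trading, the broker's ex-post rule at $i=2$ accepts a buy-from-the-broker trade exactly when $P_e-P_2>\xi$, i.e.\ on $\{P_0-P_2>\xi\}$ --- the same indicator already appearing in \eqref{eqn: profits from LAs with LL}. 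With the two events identified, the definition of conditional probability gives the ratio above. Its numerator $\PP[P_1-P_0>\Delta\,,\,P_0-P_2>\xi]$ is, by definition, $A(\tDelta)$ in \eqref{eqn:LA Prob Exec I}, whose closed form in terms of $\Phi$ and the bivariate normal cdf was already established in Proposition~\ref{thm: Losses to LA with last look}; and since $P_1-P_0=\sigma Z_1$ with $Z_1$ standard normal and $\sigma>0$, the denominator $\PP[P_1-P_0>\Delta]=\PP[Z_1>\Delta/\sigma]$ is an elementary univariate Gaussian tail. Substituting both into the ratio gives the identity in the proposition.

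There is essentially no analytic obstacle here: no integral to evaluate beyond what Proposition~\ref{thm: Losses to LA with last look} already supplies. The only point requiring care is the sign bookkeeping --- one must check that on the buy-first leg the acceptance event is $\{P_0-P_2>\xi\}$ (consistent with the footnoted rejection rule and with the indicators used throughout this section), and that the conditioning event $\{P_1-P_0>\Delta\}$ is exactly the trigger for that leg rather than for the sell-first leg. Once these are lined up, the statement follows immediately from the definition of $A$ and the tail formula for $Z_1$.
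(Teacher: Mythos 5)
Your proposal is correct and takes essentially the same route as the paper's own proof: identify the buy-at-stale leg with the event $\{P_1-P_0>\Delta\}$ and acceptance with $\{P_0-P_2>\xi\}$, invoke the symmetry between the buy-first and sell-first legs, and apply the definition of conditional probability with numerator $A(\tDelta)$ from \eqref{eqn:LA Prob Exec I}. Note only that your denominator $\PP[P_1-P_0>\Delta]=\PP[Z_1>\Delta/\sigma]$ equals $\Phi\left(-\frac{\Delta}{\sigma}\right)$ rather than the $\Phi\left(\frac{\Delta}{\sigma}\right)$ printed in the proposition, so your computation in fact exposes a sign typo in the paper's displayed formula.
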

\begin{proof}
Due to symmetry, we need only look at the case when the sell is at the stale and buy at the updated quote. The result above then follows immediately from the definition of conditional probabilities and using the result in \eqref{eqn:LA Prob Exec I}. \qed
\end{proof}
\begin{proposition}\label{prop:prob LA given reject}\textbf{Rejecting Latency Arbitrageur's Execution.} The probability that a trader was an LA given that the trade was rejected is
\[
\Upsilon = \PP[\text{LA}\,|\, \text{reject}\,] = \alpha \;\frac{1-\Psi_{LA} }{1-(\alpha\,\Psi_{LA}+(1-\alpha)\,\Psi_{ST})}\,.
\]
\end{proposition}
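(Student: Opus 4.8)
The statement is a direct application of Bayes' rule, so the plan is short. First I would work on the probability space describing a single submitted trade, on which the events $\{\text{LA}\}$ and $\{\text{ST}\}$ partition the counterparty type with $\PP[\text{LA}] = \alpha$ and $\PP[\text{ST}] = 1-\alpha$, and $\{\text{reject}\}$ is the event that the broker vetoes the trade at time $i=2$. Bayes' theorem then gives
\[
\PP[\text{LA}\mid\text{reject}] = \frac{\PP[\text{reject}\mid\text{LA}]\,\PP[\text{LA}]}{\PP[\text{reject}]} = \frac{\alpha\,\PP[\text{reject}\mid\text{LA}]}{\PP[\text{reject}]}\,.
\]

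Next I would identify the two conditional rejection probabilities. By the preceding proposition, an LA's trade is executed with probability $\Psi_{LA}$ (conditional on the LA actually placing a trade, i.e. on a profitable stale-quote arbitrage of Case I or II being available), so $\PP[\text{reject}\mid\text{LA}] = 1 - \Psi_{LA}$; by Proposition \ref{prop: Prob ST exec}, an ST's trade is executed with probability $\Psi_{ST}$, so $\PP[\text{reject}\mid\text{ST}] = 1 - \Psi_{ST}$. The law of total probability then yields
\[
\PP[\text{reject}] = \alpha\,(1-\Psi_{LA}) + (1-\alpha)\,(1-\Psi_{ST}) = 1 - \bigl(\alpha\,\Psi_{LA} + (1-\alpha)\,\Psi_{ST}\bigr)\,.
\]
Substituting the last two displays into the Bayes expression gives
\[
\Upsilon = \frac{\alpha\,(1-\Psi_{LA})}{1 - \bigl(\alpha\,\Psi_{LA} + (1-\alpha)\,\Psi_{ST}\bigr)}\,,
\]
which is the claimed identity.

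The computation is routine; the one point that needs care is the consistency of the conditioning. The event $\{\text{reject}\}$ is implicitly conditioned on a trade being submitted, $\Psi_{LA}$ is already an execution probability conditional on the LA trading, and correspondingly $\alpha$ here plays the role of the fraction of \emph{submitted} trades that originate from LAs -- consistent with the alternative reading of $\alpha$ flagged in Section \ref{sec: optimal spread no LL}. Once this bookkeeping is pinned down, no further work is needed: in particular the explicit expressions for $\Psi_{LA}$ and $\Psi_{ST}$ (from \eqref{eqn:LA Prob Exec I} and \eqref{eqn: Psi ST}) play no role, only their interpretation as execution probabilities does. I expect the only (minor) obstacle to be stating this conditioning convention precisely rather than any analytic difficulty.
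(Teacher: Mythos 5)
Your proposal is correct and follows essentially the same route as the paper: a direct application of Bayes' theorem with $\PP[\text{reject}\mid\text{LA}]=1-\Psi_{LA}$, $\PP[\text{reject}\mid\text{ST}]=1-\Psi_{ST}$, and the law of total probability in the denominator. The paper's proof is just a terser version of the same argument, so no further comment is needed.
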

\begin{proof}
A straightforward application of Bayes' Theorem implies that
\begin{align}
\PP[\text{LA}\,|\, \text{trade rejected}\,]
=&\,  \alpha \;\frac{\PP[\text{reject} \,|\, \text{trade LA}\,]}{\PP[\text{reject}\,|\,\text{trade}]}\,,
\end{align}
and the result follows. \qed
\end{proof}
\begin{SCfigure}[20][!t]
%\begin{center}
\hspace{5em}
\includegraphics[width=0.45\textwidth]{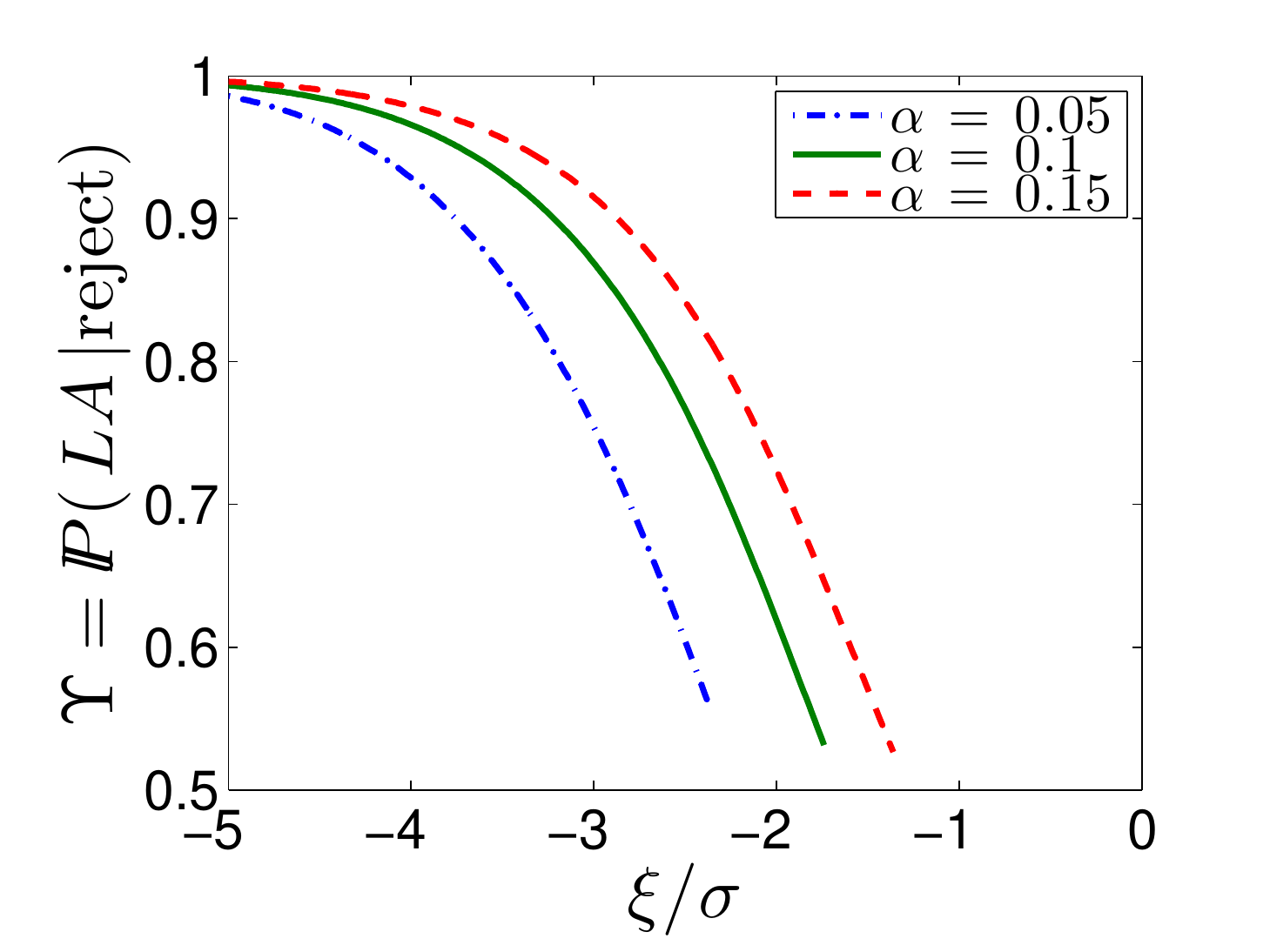}
\hspace{2em}
%\end{center}
\caption{The probability that a trader was an LA given that the trade was rejected. \vspace{8em}} \label{fig:Upsilon}
\end{SCfigure}
In Figure \ref{fig:Upsilon}, we plot the probability that the agent was an LA, given that the trade was rejected, as a function of the cutoff $\xi$. For each level of $\xi$, we first determine the optimal spread as in Figure \ref{fig:Delta Opt}, and then compute $\Upsilon$ from Proposition \ref{prop:prob LA given reject}. The plot shows this is a decreasing function of $\xi$, and can be interpreted as follows: as the rejection threshold $\xi$ increases, so that more trades are rejected, it is more difficult to assess whether the trade was emanating from an LA or an ST because the rule rejects trades that are modestly profitable. That is, as the broker increases the value of $\xi$ and rejects more trades, she is risking rejecting trades from STs and not only those of the LAs.

\section{Optimal Spread for a Slow Trader and Value of Order Flow}\label{sec: optimal spread for ST}

As seen in the last section, if the venue selects a cutoff level $\xi$, then there is a unique optimal spread $\Delta^*$ which earns the risk-neutral broker zero-expected profit. In other words, there is an optimal spread such that the brokers' expected revenue  from trading with STs equal the expected losses  from trading with LAs.   Moreover, although the broker is indifferent to the choice of $\xi$,  increasing the cutoff, increases the probability that the  rejected trade stems from an ST and not an LA, see Figure \ref{fig:Upsilon}.

Hence,   what is the optimal cutoff $\xi^*$ and the corresponding optimal  spread? To answer this question, we view the problem from the perspective of an ST and the different costs that accrue to the ST. In addition to the expected roundtrip cost $\Omega_{ST}$,   other costs are:  forgone profits which should have accrued to the  ST;  immediacy costs which are high if the ST requires immediate and guaranteed execution -- for example costs that stem from a trading objective that could not be realized (trade could be part of larger operation); and more importantly,  the ST must return to the market to complete the trade which, if executed, is expected to be at a worse price because rejections  occur when prices move in favor of (against) the ST (broker); and, arguably, the ST is exposed to being frontrun.\footnote{Frontrunning is an illegal activity, but FX market participants have argued that Last Look exposes them to frontrunning, see \cite{BoE}. }  Thus  the `effective cost' to the ST is given by
\begin{equation}\label{eqn: omega hat for ST}
\widehat{\Omega}_{ST} = \Omega_{ST} +  C_{ST}(\alpha, \beta, \Delta, \sigma, \theta_{ST}) \,,
\end{equation}
where $\Omega_{ST}$ is the cost to the ST due to the spread and the potential rejection of trades due to Last Look as given in Proposition \ref{thm: cost to ST with last look}, $C_{ST}$ is the additional cost, where $\theta_{ST}$ is a set of idiosyncratic parameters.

We remark that the ST's effective cost is not necessarily lower than the cost that he would incur if trading in a venue without the Last Look option. Thus, depending on the value of the additional cost $ C_{ST}$, the ST will prefer to trade in a venue with Last Look if $\widehat{\Omega}_{ST}< \Delta^0$, where $\Delta^0$ is the spread without Last Look, i.e. $\xi=-\infty$. If the proportion $\alpha $ of LAs in the market is not too  large, so that we can use the simpler expression for the spread without Last Look in Proposition \ref{prop: asymp slow spread no last look}, then  STs  prefer venues with Last Look as long as their  effective costs  are such that
\begin{equation}
\widehat{\Omega}_{ST} < \sqrt{\frac{2}{\pi}}\,\alpha\,\sigma\,.
\end{equation}

Moreover, when the ST prefers venues with Last Look, our results also help to determine  the rejection threshold which minimizes the ST's effective cost.
Figure \ref{fig:hat Omega ST} shows the ST's effective cost with
\begin{equation}\label{eqn: particular Cost}
C_{ST}(\alpha, \beta, \Delta, \sigma, \theta_{ST})=\delta\,(1-\Psi_{ST})\,,
\end{equation}
where $\delta = 0.5\,|\xi|$, and recall  $\Psi_{ST}$ is the probability that the ST's trade is accepted and given in \eqref{eqn: Psi ST},  $\beta=0.8$,  and $\alpha = 0.15$. This choice of $\delta$ is such that every time the ST's profitable trade is rejected, he imputes a cost of half the broker's rejection threshold which is less than half of the forgone profits. For this choice of parameters it is clear that there is an optimal spread where the costs to the ST are minimized. The ST's effective cost is minimized at   $\xi^*/\sigma = -2.49$ which corresponds to  an optimal spread of  $\Delta^*]/\sigma=0.065$, (one can also trace this optimal spread by looking at the left panel in Figure \ref{fig:Delta Opt No Lastlook}). Finally, this spread  is about $50\%$ of the   spread that the broker  charges in the absence of the Last Look option,  which is $\Delta/\sigma=0.12$  (see spreads  as $\xi/\sigma$ goes to $-\infty$ in the left panel of Figure \ref{fig:Delta Opt}).

\begin{SCfigure}[20][!t]
%\begin{center}
\hspace{5em}
\includegraphics[width=0.45\textwidth]{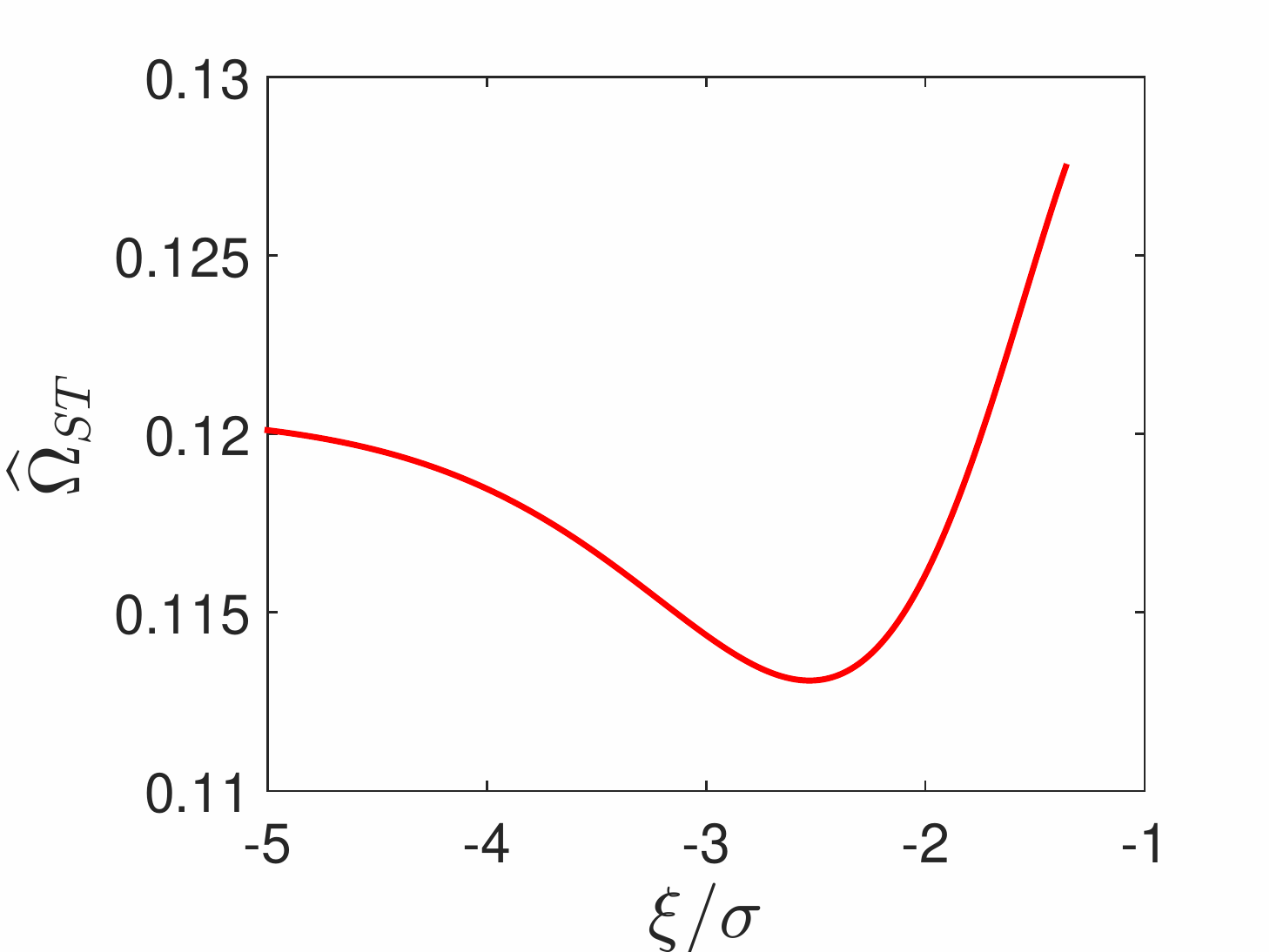}
\hspace{2em}
%\end{center}
\caption{The effective cost to the ST accounting for the cost of rejected trades. $\beta=0.8$, $\delta=0.5\,|\xi|$, $\rho=0.5$, $\alpha=0.15$.   \vspace{6em} }\label{fig:hat Omega ST}
\end{SCfigure}

In our model we assume that the broker does not know the type of trader she is facing, but when FX transactions are over-the-counter (instead of an ECN where the counterparty is anonymous) the broker has more information about the identity and strategies of her counterparties. For example, the broker might know if she is facing a trader who executes latency arbitrage trades and she is still willing to trade (and reject) some of the trades. LAs may  also be considered informed traders so the broker benefits from observing the order flow from informed traders. Recall that liquidity providers make prices to their over-the-counter clients and also post quotes on other venues and ECNs. Thus, observing order flow from informed traders is valuable. We could include this in our model in the same way that we included the additional cost that the STs incur, but in this case the broker imputes a positive revenue to executing trades with LAs. Thus, the broker's effective losses to LAs are
\begin{equation}\label{eqn: omega hat for LA}
\widehat{\Omega}_{LA} = \Omega_{LA} - C_{LA}(\alpha, \beta, \Delta, \sigma, \theta_{LA})\,,
\end{equation}
where $C_{LA}\geq 0$ is the benefit that the broker imputes to learning from LAs' order flow.

%\footnote{See \cite{kyle1985} for a model of insider trading and how information is impounded in asset prices, and see \cite{pasquariello2007informed} for a study of the bond market  where  order flow conveys information that is gradually revealed to the market.}

\section{Asymptotic expressions: Spread, profit, and cost}

When the proportion of LAs in the market is small, the expressions for: the optimal spread (with Last Look),  expected profit and cost of a round-trip trade for LAs and STs,  can be approximated to first order. Later, in Section \ref{sec: equilibrium venues} we employ these expressions to show the equilibrium quantities when there are multiple venues.

\begin{proposition}\textbf{Asymptotic Optimal Spread with Last Look.}\label{prop: asymptotic spread}
When the proportion of LAs trading in the market is small, the asymptotic solution of the optimal spread is given by
\begin{equation}
\frac{\Delta^*}{\sigma} = \tilde\Delta_0 + \tilde\Delta_1\,\alpha + o(\alpha)\,,
\end{equation}
where
\begin{equation}
\tilde\Delta_0 = -\frac{ (1-\beta)\,\sqrt{\frac{1+\rho}{2}}\;\phi\left(\frac{\txi}{\sqrt{2(1+\rho)}}\,\right) }{\beta\,\Phi\left(-\txi\right)+(1-\beta)\,\Phi\left(-\frac{\txi}{\sqrt{2(1+\rho)}}\right)}\,,
\label{eqn: asymp spread Delta_0}
\end{equation}
and
\begin{equation}
\tilde\Delta_1 = 2 \frac{ B(\tilde\Delta_0)- \tilde\Delta_0\,A(\tilde\Delta_0) }{\beta\,\Phi\left(-\txi\right)+(1-\beta)\Phi\left(-\frac{\txi}{\sqrt{2(1+\rho)}}\right)}\,,
\label{eqn: asymp spread Delta_1}
\end{equation}
and $A(\cdot)$ and $B(\cdot)$ are defined in \eqref{eqn:LA Prob Exec I} and \eqref{eqn:LA Prob Exec II}, respectively.
\end{proposition}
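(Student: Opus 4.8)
The plan is to turn the zero-profit balancing equation \eqref{eqn: optimal spread balancing equation} into a single scalar equation for the normalized spread $\tDelta=\Delta/\sigma$ and apply the implicit function theorem at $\alpha=0$. The key structural fact is that, by Proposition \ref{thm: cost to ST with last look}, $\Omega_{ST}$ is \emph{affine} in $\Delta$: writing $\Omega_{ST}=\sigma\,(c_0+c_1\,\tDelta)$ with
\[
c_0=(1-\beta)\sqrt{\tfrac{1+\rho}{2}}\;\phi\!\left(\tfrac{\txi}{\sqrt{2(1+\rho)}}\right),\qquad c_1=\beta\,\Phi(-\txi)+(1-\beta)\,\Phi\!\left(-\tfrac{\txi}{\sqrt{2(1+\rho)}}\right),
\]
while, by Proposition \ref{thm: Losses to LA with last look}, $\Omega_{LA}=2\,\sigma\,\bigl(B(\tDelta)-\tDelta\,A(\tDelta)\bigr)$. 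Dividing \eqref{eqn: optimal spread balancing equation} by $\sigma$ shows that it is equivalent to $F(\tDelta,\alpha)=0$, where
\[
F(\tDelta,\alpha):=(1-\alpha)\,(c_0+c_1\,\tDelta)-2\,\alpha\,\bigl(B(\tDelta)-\tDelta\,A(\tDelta)\bigr).
\]

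First I would record that $F$ is jointly smooth in $(\tDelta,\alpha)$, since $c_0,c_1$ are constants and $A(\cdot),B(\cdot)$ are built from $\phi$, $\Phi$ and the bivariate normal cdf appearing in \eqref{eqn:LA Prob Exec I}--\eqref{eqn:LA Prob Exec II}, all $C^\infty$ in $\tDelta$. At $\alpha=0$ the equation reduces to $c_0+c_1\,\tDelta=0$; because $c_1>0$ (a strictly positive combination of Gaussian cdf values), this has the unique root $\tDelta_0=-c_0/c_1$, which is exactly \eqref{eqn: asymp spread Delta_0}. Since $\partial_{\tDelta}F(\tDelta_0,0)=c_1>0$, the implicit function theorem gives a unique $C^1$ branch $\alpha\mapsto\tDelta^*(\alpha)$ near $\alpha=0$ with $\tDelta^*(0)=\tDelta_0$; this is the branch picked out by continuity, i.e.\ the economically relevant optimal spread of the previous section.

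Then I would compute the first-order coefficient. Differentiating $F(\tDelta^*(\alpha),\alpha)=0$ in $\alpha$ and evaluating at $\alpha=0$ gives $\partial_\alpha F(\tDelta_0,0)+c_1\,(\tDelta^*)'(0)=0$. Since
\[
\partial_\alpha F(\tDelta_0,0)=-(c_0+c_1\,\tDelta_0)-2\,\bigl(B(\tDelta_0)-\tDelta_0\,A(\tDelta_0)\bigr)=-2\,\bigl(B(\tDelta_0)-\tDelta_0\,A(\tDelta_0)\bigr),
\]
using that the first bracket vanishes by definition of $\tDelta_0$, we obtain $(\tDelta^*)'(0)=2\,\bigl(B(\tDelta_0)-\tDelta_0\,A(\tDelta_0)\bigr)/c_1$, which is precisely $\tDelta_1$ in \eqref{eqn: asymp spread Delta_1}. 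Taylor's theorem then yields $\tDelta^*(\alpha)=\tDelta_0+\tDelta_1\,\alpha+o(\alpha)$. Equivalently, one can substitute the ansatz $\tDelta=\tDelta_0+\tDelta_1\,\alpha+o(\alpha)$ directly into $F=0$ and match the $\alpha^0$ and $\alpha^1$ coefficients: the affine structure of $\Omega_{ST}$ makes the $\alpha^1$ balance linear in $\tDelta_1$, and $B(\tDelta)-\tDelta A(\tDelta)$ only needs to be evaluated at $\tDelta_0$ because it multiplies $\alpha$.

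I expect the only genuinely delicate step to be the justification that a well-defined, differentiable branch $\tDelta^*(\alpha)$ exists and is the relevant one --- that is, verifying $c_1>0$ and the smoothness of $A,B$ so the implicit function theorem applies --- rather than the (routine) order-matching algebra. It is also worth remarking, as in the discussion following Proposition \ref{thm: cost to ST with last look}, that the expansion is purely local in $\alpha$ and that $\tDelta_0$ may be negative (in which case the feasible optimal spread is $0$), a feature that does not affect the formal asymptotics.
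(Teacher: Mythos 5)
Your proposal is correct and follows essentially the same route as the paper: both exploit the affine form $\Omega_{ST}=\sigma(a_{ST}+b_{ST}\,\tDelta)$ from Proposition \ref{thm: cost to ST with last look} and the exact $\Omega_{LA}=2\sigma\bigl(B(\tDelta)-\tDelta\,A(\tDelta)\bigr)$, set the zeroth-order balance to get $\tDelta_0=-a_{ST}/b_{ST}$, and solve the linear first-order balance to get $\tDelta_1=\Omega_{LA}(\tDelta_0)/(\sigma\,b_{ST})$. The only difference is presentational: you dress the coefficient matching in the implicit function theorem (which also supplies the existence and smoothness of the branch $\tDelta^*(\alpha)$ that the paper's direct ansatz substitution leaves implicit), and your closing remarks on $b_{ST}>0$ and on $\tDelta_0$ possibly being negative are accurate but ancillary.
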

\begin{proof}
See \ref{sec: proof of prop: asymptotic spread}.
\end{proof}

\begin{proposition}\textbf{Asymptotic Cost to STs.}
\label{prop: asymptotic omega ft}
When the proportion of LAs trading in the market is small, the broker sets spreads to make zero net profit according to \eqref{eqn: optimal spread balancing equation}, and $C_{ST}$ is as in \eqref{eqn: particular Cost}, the expected (asymptotic) costs of a round-trip trade to STs are
\begin{equation}
\widehat\Omega_{ST} = \eta_0\,\sigma + \eta_1\,\sigma \,\alpha + o(\alpha)\,,
\end{equation}
where
\[
\eta_0 = \frac{\delta}{\sigma}\,(1-\Psi_{ST})\,, \qquad \text{and} \qquad \eta_1=2\,(B(\tilde\Delta_0)-\tilde\Delta_0\,A(\tilde\Delta_0))\,\,.
\]
\end{proposition}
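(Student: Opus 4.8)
The plan is to combine the broker's zero--expected--profit condition \eqref{eqn: optimal spread balancing equation} with the closed form for $\Omega_{LA}$ in Proposition \ref{thm: Losses to LA with last look} and the two--term spread expansion in Proposition \ref{prop: asymptotic spread}. The first thing to notice is that the additional cost decouples cleanly from the $\alpha$--expansion: since $C_{ST}=\delta\,(1-\Psi_{ST})$ with $\delta=0.5\,|\xi|$ a venue constant, and since $\Psi_{ST}$ in \eqref{eqn: Psi ST} depends only on $\beta$, $\rho$ and $\txi=\xi/\sigma$ --- in particular neither on the quoted spread (as remarked after Proposition \ref{prop: Prob ST exec}) nor on $\alpha$ --- the term $C_{ST}$ equals $\eta_0\,\sigma$ \emph{exactly}, with $\eta_0=\tfrac{\delta}{\sigma}(1-\Psi_{ST})$, and so contributes to the $O(1)$ term of $\widehat\Omega_{ST}$ and to nothing else. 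Thus the whole problem reduces to expanding $\Omega_{ST}(\Delta^*)$ to first order in $\alpha$.

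For that I would not substitute the spread expansion into the affine--in--$\Delta$ formula \eqref{eqn: Omega ST exact} and check an $O(1)$ cancellation by hand; it is shorter to use the balancing equation \eqref{eqn: optimal spread balancing equation}, which gives $\Omega_{ST}(\Delta^*)=\tfrac{\alpha}{1-\alpha}\,\Omega_{LA}(\Delta^*)$. By Proposition \ref{thm: Losses to LA with last look}, $\Omega_{LA}(\Delta^*)=2\,\sigma\,\bigl(B(\tDelta^*)-A(\tDelta^*)\,\tDelta^*\bigr)$ with $\tDelta^*=\Delta^*/\sigma$. From Proposition \ref{prop: asymptotic spread}, $\tDelta^*=\tilde\Delta_0+\tilde\Delta_1\,\alpha+o(\alpha)$, and because $A(\cdot)$ and $B(\cdot)$ in \eqref{eqn:LA Prob Exec I}--\eqref{eqn:LA Prob Exec II} are smooth in their argument (compositions of $\phi$, $\Phi$ and a bivariate normal cdf), continuity at $\tilde\Delta_0$ yields $B(\tDelta^*)-A(\tDelta^*)\,\tDelta^*=B(\tilde\Delta_0)-A(\tilde\Delta_0)\,\tilde\Delta_0+O(\alpha)$. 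Combining this with $\tfrac{\alpha}{1-\alpha}=\alpha+O(\alpha^2)$ gives $\Omega_{ST}(\Delta^*)=2\,\sigma\,\bigl(B(\tilde\Delta_0)-A(\tilde\Delta_0)\,\tilde\Delta_0\bigr)\,\alpha+o(\alpha)=\eta_1\,\sigma\,\alpha+o(\alpha)$, and adding the constant $C_{ST}=\eta_0\,\sigma$ delivers $\widehat\Omega_{ST}=\eta_0\,\sigma+\eta_1\,\sigma\,\alpha+o(\alpha)$.

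There is essentially no hard step: the statement is bookkeeping on top of Propositions \ref{thm: cost to ST with last look}, \ref{thm: Losses to LA with last look} and \ref{prop: asymptotic spread}. The only point deserving a line of care is the passage from $\tfrac{\alpha}{1-\alpha}\,\Omega_{LA}(\Delta^*)$ to $\alpha\,\Omega_{LA}(\sigma\tilde\Delta_0)$ up to $o(\alpha)$, which needs $\Omega_{LA}(\Delta^*)$ to stay bounded as $\alpha\to0$ and to depend continuously on $\tDelta^*$; both are immediate from the explicit formula \eqref{eqn: Omega LA exact} and from $\tDelta^*\to\tilde\Delta_0$. As a cross--check one can instead verify the claim directly from \eqref{eqn: Omega ST exact}: writing $D:=\beta\,\Phi(-\txi)+(1-\beta)\,\Phi\bigl(-\tfrac{\txi}{\sqrt{2(1+\rho)}}\bigr)$ for the denominator appearing in \eqref{eqn: asymp spread Delta_0}--\eqref{eqn: asymp spread Delta_1}, the $O(1)$ part of $\Omega_{ST}(\Delta^*)$ equals $\sigma(1-\beta)\sqrt{\tfrac{1+\rho}{2}}\,\phi\bigl(\tfrac{\txi}{\sqrt{2(1+\rho)}}\bigr)+\sigma\,\tilde\Delta_0\,D$, which vanishes by the definition of $\tilde\Delta_0$ in \eqref{eqn: asymp spread Delta_0}, while its $O(\alpha)$ part equals $\sigma\,\tilde\Delta_1\,D=\eta_1\,\sigma$ by \eqref{eqn: asymp spread Delta_1}. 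Both routes give the stated coefficients.
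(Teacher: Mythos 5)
Your proof is correct; your primary route is a modest but genuine variation on the paper's, and your ``cross-check'' is in fact the paper's own argument. The paper proves the proposition by writing $\Omega_{ST}(\tDelta)=\sigma\,(a_{ST}+b_{ST}\,\tDelta)$ (the affine form from the proof of Proposition \ref{prop: asymptotic spread}), substituting $\tDelta^*=\tilde\Delta_0+\tilde\Delta_1\,\alpha+o(\alpha)$, using that the $O(1)$ part vanishes because $\tilde\Delta_0$ is defined as the root of $a_{ST}+b_{ST}\,\tilde\Delta_0=0$, and then identifying the $O(\alpha)$ part $\sigma\,b_{ST}\,\tilde\Delta_1\,\alpha$ with $\eta_1\,\sigma\,\alpha$ via $\tilde\Delta_1=\Omega_{LA}(\tilde\Delta_0)/b_{ST}$; it also notes, as you do, that $\Psi_{ST}$ is independent of $\Delta$ and $\alpha$, so $C_{ST}=\delta\,(1-\Psi_{ST})=\eta_0\,\sigma$ exactly. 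Your main argument instead reads the first-order term straight off the zero-profit condition \eqref{eqn: optimal spread balancing equation}: $\Omega_{ST}(\Delta^*)=\tfrac{\alpha}{1-\alpha}\,\Omega_{LA}(\Delta^*)$, and then only boundedness and continuity of $\Omega_{LA}$ at $\tilde\Delta_0$ (plus $\tfrac{\alpha}{1-\alpha}=\alpha+O(\alpha^2)$ and $\tDelta^*\to\tilde\Delta_0$) are needed. What this buys is conceptual transparency: it shows immediately that $\eta_1$ must equal the zeroth-order LA profit coefficient $\gamma_0$ of Proposition \ref{prop: asymptotic omega la} (the ST's first-order cost is exactly the LAs' leading-order gain recycled through the break-even condition), and it does not rely on $\Omega_{ST}$ being affine in $\Delta$, only on $\sigma\tilde\Delta_0$ being the zero-$\alpha$ break-even spread. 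The paper's substitution route is more mechanical but ties the coefficients directly to the explicit formulas \eqref{eqn: asymp spread Delta_0}--\eqref{eqn: asymp spread Delta_1}, which is what your verification from \eqref{eqn: Omega ST exact} reproduces. No gaps.
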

\begin{proof}
See \ref{sec: proof of prop: asymptotic omega ft}.
\end{proof}

\begin{proposition}\textbf{Asymptotic Profit to LAs.}
\label{prop: asymptotic omega la}
When the proportion of LAs trading in the market is small, the expected (asymptotic) profit of a round-trip trade  to LAs is
\begin{equation}
\Omega_{LA}= \gamma_0\,\sigma + \gamma_1\,\sigma\,\alpha + o(\alpha)\,,
\end{equation}
where
\[
\gamma_0 = 2\,\left( B(\tilde\Delta_0) - \tilde\Delta_0\,A(\tilde\Delta_0)\right)\,, \qquad
\gamma_1 = 2\,\left( B'(\tilde\Delta_0) - A(\tilde\Delta_0)- \tilde\Delta_0\,A'(\tilde\Delta_0)\right)\,,
\]
$A(\cdot)$ and $B(\cdot)$ are as in \eqref{eqn:LA Prob Exec I} and \eqref{eqn:LA Prob Exec II} respectively, and $A'(\cdot)$ and $B'(\cdot)$ denote derivatives w.r.t. $\tDelta$:
\begin{equation}
A'(\tDelta) = -\sqrt{1-\rho^2}\,\phi(\tDelta)\,\Phi\left(-\frac{\txi}{\sqrt{2(1+\rho)}}\right)\,,
\end{equation}
and
\begin{equation}
\begin{split}
B'(\tDelta) =&\, -\left\{
\frac{1+\rho}{\sqrt{1-\rho^2}}\,
\phi\left(-\frac{\txi+(1+\rho)\tDelta}{\sqrt{1-\rho^2}}\right)
+\tDelta \; \Phi\left(-\frac{\txi+(1+\rho)\tDelta}{\sqrt{1-\rho^2}}\right)
\right\}\phi(\tDelta) \,
\\
&\,
+ \,\sqrt{\frac{1+\rho}{1-\rho}} \;\phi\left(\frac{\txi}{\sqrt{2(1+\rho)}}\right) \;\phi\left( -\frac{\txi+2\,\tDelta}{\sqrt{2(1-\rho)}} \right)\,.
\end{split}
\end{equation}
\end{proposition}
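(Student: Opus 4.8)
The plan is to write $\Omega_{LA}$ as a function of $\alpha$ by substituting the asymptotic optimal spread $\tDelta=\tDelta_0+\tDelta_1\,\alpha+o(\alpha)$ from Proposition~\ref{prop: asymptotic spread} into the exact formula $\Omega_{LA}=2\,\sigma\,\big(B(\tDelta)-\tDelta\,A(\tDelta)\big)$ of Proposition~\ref{thm: Losses to LA with last look} (with $\tDelta=\Delta^*/\sigma$), and then Taylor-expanding in $\alpha$. Since $A$ and $B$ in \eqref{eqn:LA Prob Exec I}--\eqref{eqn:LA Prob Exec II} are finite combinations of $\phi$, $\Phi$ and a bivariate normal CDF, they are smooth in $\tDelta$ for $\rho\in(-1,1)$, so the composition is legitimate and the $o(\alpha)$ remainder is controlled.

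The first step is to differentiate $A$ and $B$ with respect to $\tDelta$. In $A(\tDelta)$ only the bivariate term $\Phi_{\sqrt{(1+\rho)/2}}\!\big(\tDelta,\,-\txi/\sqrt{2(1+\rho)}\big)$ depends on $\tDelta$, so I would use the standard identity $\partial_a\Phi_r(a,b)=\phi(a)\,\Phi\!\big((b-r\,a)/\sqrt{1-r^2}\big)$, here with $r=\sqrt{(1+\rho)/2}$ and $\sqrt{1-r^2}=\sqrt{(1-\rho)/2}$, then collapse the resulting argument to get $A'$. In $B(\tDelta)$ I would differentiate term by term: the first summand $\phi(\tDelta)\,\Phi\!\big(-(\txi+(1+\rho)\tDelta)/\sqrt{1-\rho^2}\big)$ by the product rule with $\phi'(\tDelta)=-\tDelta\,\phi(\tDelta)$ and the chain rule on the inner argument (slope $-(1+\rho)/\sqrt{1-\rho^2}$); the second summand carries a $\tDelta$-free prefactor, so only its last $\Phi$ contributes, with inner slope $-\sqrt{2/(1-\rho)}$. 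Collecting the pieces gives $B'$.

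With $A'$ and $B'$ in hand I would substitute $\tDelta=\tDelta_0+\tDelta_1\,\alpha+o(\alpha)$ into $2\,\sigma\,\big(B(\tDelta)-\tDelta\,A(\tDelta)\big)$: by Taylor's theorem $B(\tDelta)=B(\tDelta_0)+B'(\tDelta_0)\,\tDelta_1\,\alpha+o(\alpha)$ and, by the product rule, $\tDelta\,A(\tDelta)=\tDelta_0\,A(\tDelta_0)+\big(A(\tDelta_0)+\tDelta_0\,A'(\tDelta_0)\big)\tDelta_1\,\alpha+o(\alpha)$. Subtracting and multiplying by $2\sigma$, the $\alpha^0$ coefficient is $\gamma_0\,\sigma=2\,\sigma\,\big(B(\tDelta_0)-\tDelta_0\,A(\tDelta_0)\big)$ and the $\alpha^1$ coefficient is driven by $\tfrac{d}{d\tDelta}\big(B-\tDelta A\big)\big|_{\tDelta_0}=B'(\tDelta_0)-A(\tDelta_0)-\tDelta_0\,A'(\tDelta_0)$, carried through the spread's slope $\tDelta_1$; reading it off gives $\gamma_1$.

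The main obstacle is the first step: the sign-clean differentiation and the algebraic simplification of $A$ and $B$, in particular collapsing quotients of the form $(b-r\,a)/\sqrt{1-r^2}$ and untangling the nested $\sqrt{1\pm\rho}$ and $\sqrt{1-\rho^2}$ factors into the compact closed forms. Once $A'$ and $B'$ are verified, the composition with Proposition~\ref{prop: asymptotic spread} and the collection of orders in $\alpha$ is mechanical.
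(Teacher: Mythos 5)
Your plan follows the same route as the paper's own proof: differentiate $A$ and $B$ in $\tDelta$, substitute the expansion $\tDelta^*=\tDelta_0+\tDelta_1\,\alpha+o(\alpha)$ from Proposition \ref{prop: asymptotic spread} into $\Omega_{LA}=2\,\sigma\,\bigl(B(\tDelta)-\tDelta\,A(\tDelta)\bigr)$, and collect powers of $\alpha$, with smoothness of $A,B$ controlling the remainder. Your term-by-term differentiation of $B$ is right and reproduces the printed $B'$.

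The problem is that, executed exactly as you describe, the calculation does not land on the formulas printed in the proposition, and your final ``reading it off gives $\gamma_1$'' conceals this. First, the chain-rule factor you correctly keep makes the first-order coefficient $2\,\tDelta_1\,\bigl(B'(\tDelta_0)-A(\tDelta_0)-\tDelta_0\,A'(\tDelta_0)\bigr)$, whereas the stated $\gamma_1$ carries no $\tDelta_1$; the paper's own proof drops $\tDelta_1$ in its expansion line (contrast Proposition \ref{prop: asymptotic omega ft}, where the analogous factor $b_{ST}\,\tDelta_1$ is retained), so this is precisely where your correct bookkeeping and the printed statement part ways, and you must either flag the missing factor or justify its absence rather than assert agreement. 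Second, your identity $\partial_a\Phi_r(a,b)=\phi(a)\,\Phi\bigl((b-r\,a)/\sqrt{1-r^2}\bigr)$ is the correct one and yields $A'(\tDelta)=-\phi(\tDelta)\,\Phi\bigl(-(\txi+(1+\rho)\,\tDelta)/\sqrt{1-\rho^2}\bigr)$, which is not the printed $A'$ (that expression has a $\sqrt{1-\rho^2}$ prefactor and a $\tDelta$-independent $\Phi$ argument); the appendix's derivation of $\partial_1\Phi_c$ omits the $1/\sqrt{1-c^2}$ normalisation of the bivariate density, which is the source of the mismatch. So the method is sound and identical in spirit to the paper's, but as a verification of the proposition as stated the last step fails: faithful execution produces corrected versions of $A'$ and $\gamma_1$, not the displayed ones.
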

\begin{proof}
See \ref{sec: proof of prop: asymptotic omega la}.
\end{proof}

\section{Equilibrium: trading in multiple venues}\label{sec: equilibrium venues}

When there is more than one venue to trade, STs  will migrate to the one where the expected losses of a round-trip trade are lowest, and LAs will migrate to the one where the  expected gains are highest. Thus, the market is in  equilibrium  when there are no incentives for either type of trader to migrate to a different venue.  On the other hand, brokers have no preference for a particular venue because spreads are set so that expected profits are zero. Moreover, recall that we assume that brokers do not pay any costs from entering/exiting a venue.

Assume there are $n$ venues to trade and each venue chooses a rejection threshold $\xi_i$,  $i={1,\,2,\,\cdots\, n}$.  Brokers and market makers in all venues are as the one described above: risk-neutral and quote spreads using the zero expected profit condition so that losses to LAs are recovered from STs, i.e.  in each venue spreads are set so that $\alpha\,\Omega_{LA} = (1-\alpha)\, \Omega_{ST}$. When traders switch between venues they incur a fixed cost denoted by $c\geq 0$. This includes customized connection costs and the costs associated with building  a relationship with the broker in  the over-the-counter  FX market.

\begin{definition}\textbf{Equilibrium Across Venues.}\label{defn: equilibrium}
Let $c$ denote the fixed migration costs between  venues and $\xi_i$ denote the rejection threshold of venue $i$. In a market with $n$ venues, an equilibrium (no incentives to migrate)  are pairs  $(\alpha_i, \,\Delta_i)$ for $i={1,\,2,\,\cdots\, n}$ such that all of the following are (simultaneously) satisfied:
\begin{subequations}\label{eqn: equi condition META}
\begin{equation}\label{eqn: equi condition ST}
\left|\,\widehat\Omega_{ST}^i(\alpha_i, \,\Delta_i) -  \widehat\Omega_{ST}^j(\alpha_j, \,\Delta_j)  \,\right| \leq c\,,
\end{equation}
and
\begin{equation}\label{eqn: equi condition LA}
\left|\,\Omega_{LA}^i(\alpha_i, \,\Delta_i) -  \Omega_{LA}^j(\alpha_j, \,\Delta_j)  \,\right| \leq c\,,
\end{equation}
for $i\neq j$, and
\begin{equation}
(1-\alpha_i)\,\Omega_{ST}^i(\alpha_i,\Delta_i) = \alpha_i\,\Omega_{LA}^i(\alpha_i,\Delta_i)
\label{eqn: equi condition spread}
\end{equation}
for all $i$, where superscripts label the venue.

In addition, the population preserving relationships  must be satisfied:
\begin{align}
\alpha_i =&\, \frac{N_{LA}^i}{N_{LA}^i + N_{ST}^i}\,, \\
%\alpha_2 =&\, \frac{N_{LA}^2}{N_{LA}^2 + N_{ST}^2}\,, \\
N_{LA} =&\, \sum^n_i N_{LA}^i \,,  \\
N_{ST} =&\, \sum^n_i N_{ST}^i\,,%
\end{align}
and the constraints
\begin{equation}
N_{ST}^i \,, \, N_{LA}^i\, \ge 0\,.
\end{equation}
\label{eqn: Equilibrium Exact}
\end{subequations}
\end{definition}

In this definition we assume that traders decide to migrate if the gains from one trade exceed the fixed migration costs. An alternative is to calculate the migration gains employing the number of transactions that the trader expects to execute in the new venue, in which case the left-hand side of inequalities \eqref{eqn: equi condition ST}, \eqref{eqn: equi condition LA} is premultiplied by the expected number of trades.

\subsection{Equilibrium across two FX trading venues}

Assume there are two venues which employ rejection thresholds $\xi_1$ and $\xi_2$. Let $N_{LA}$ and $N_{ST}$ denote the total number of LAs and STs in the market. These traders   choose which venue to trade in and decide to migrate if they are better off in the other venue.  As discussed above, the venues are in equilibrium if the expected costs for STs and expected profits for LAs, net of the migration cost $c$,  are the same across both venues -- so the marginal trader, whether ST or LA, has no incentives to migrate.

To obtain the equilibrium region we proceed as follows. For each venue we find the pairs $(\alpha_i,\,\Delta_i)$ such that STs do not have an incentive to migrate and the region where LAs do not have an incentive to migrate. That is, we find   the regions where \eqref{eqn: equi condition ST} and \eqref{eqn: equi condition LA} (together with  the population constraints and the brokers' zero expected profit condition) both hold.  Thus, the intersection between these two regions define the equilibrium where traders do not migrate to the other venue.

To obtain the regions where the two types of traders are indifferent between the two venues, we can use the closed-form formulae derived above for the optimal spread, LA's expected profits and ST's expected costs. Alternatively, if the proportion of LAs in each venue is small,  we can employ the expressions in Propositions \ref{prop: asymptotic spread}, \ref{prop: asymptotic omega ft}, and \ref{prop: asymptotic omega la}. Either approach will result in approximately the same equilibrium region. There are two advantages to employing the small $\alpha$ approximations: i) computations are extremely fast, ii) we can characterize the equilibrium region in closed-form. For the parameters we used, there is no discernable difference between the exact and approximate equilibrium regions, nor the optimal spreads implied by them.

Figure \ref{fig: equi region} shows the equilibrium region for Venue 1, when migration costs are $c=0.05$ (left-hand panel), and $c=0.025$ (right-hand panel).  The additional costs incurred by the STs are as in \eqref{eqn: particular Cost} with $\delta = 0.5\,|\xi|$. The other parameters are:  total number of LAs   $N_{LA} = 200$, total number of STs  $N_{ST} = 800$, rejection threshold in Venue 1 is fixed at $\xi_1=-3.5$, and there is no Last Look in Venue 2.  The equilibrium region is obtained using the small $\alpha$ formulae.
\begin{figure}[!t]
\begin{center}
\includegraphics[width=0.45\textwidth]{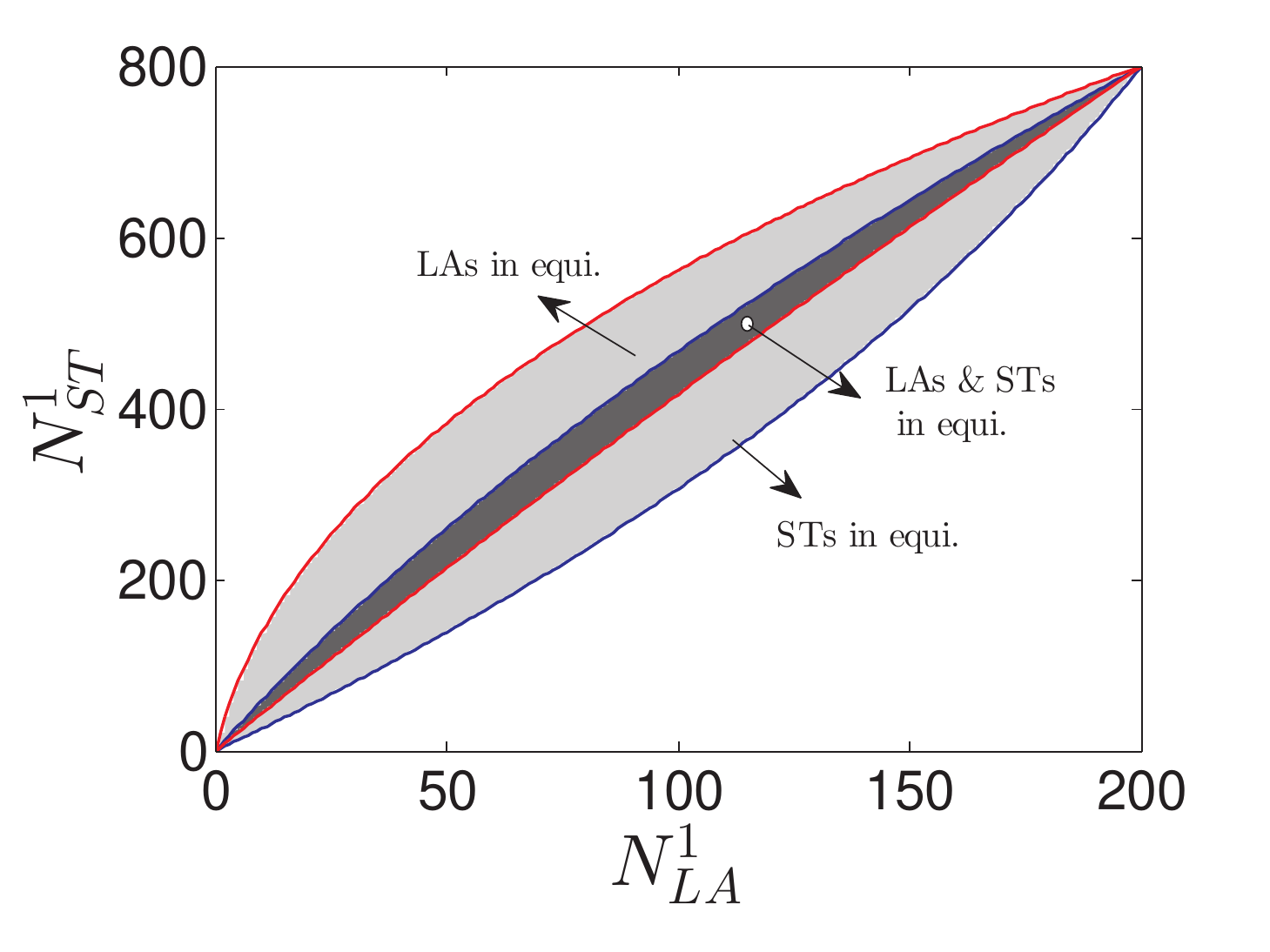}
\includegraphics[width=0.45\textwidth]{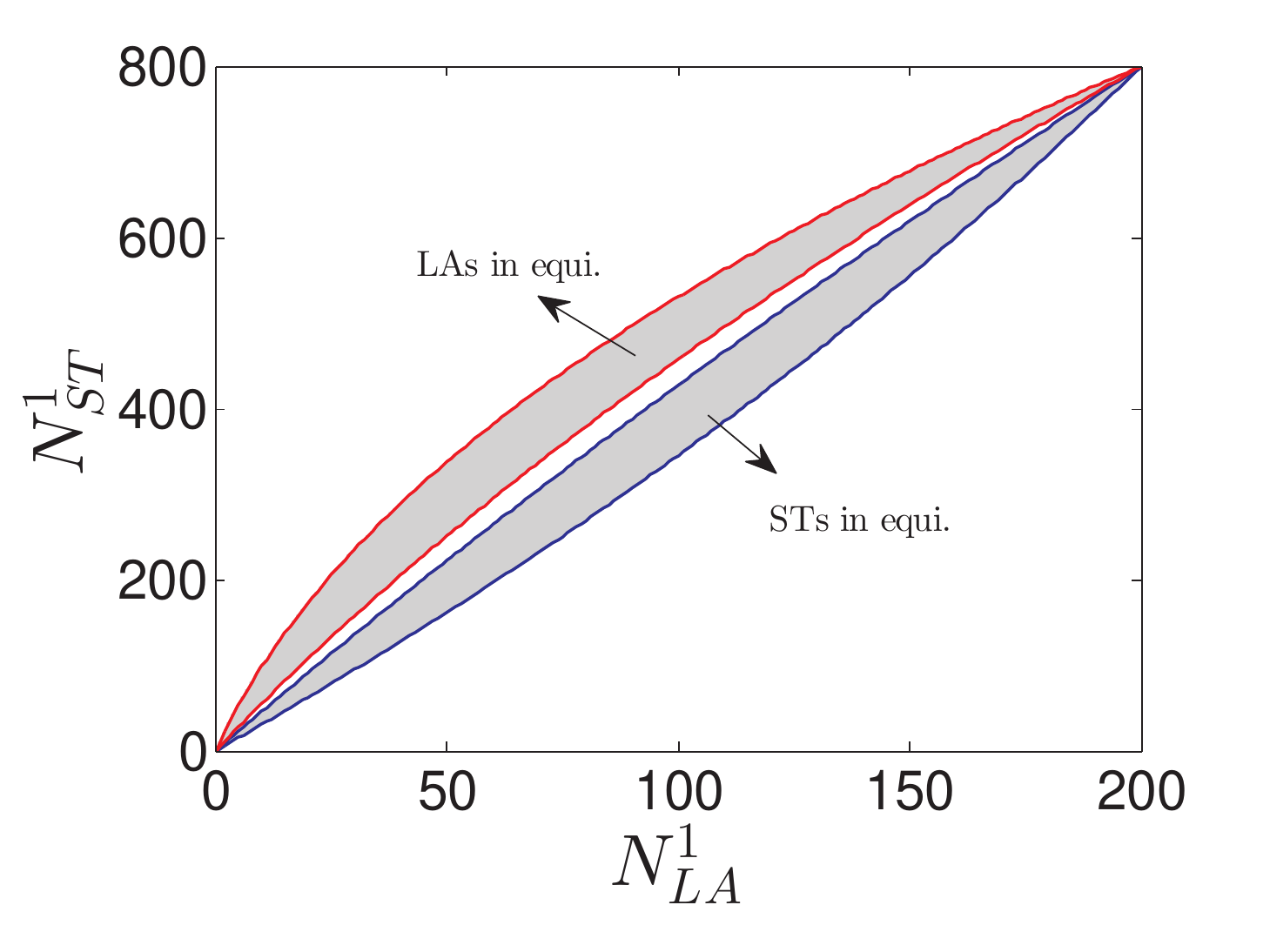}
\end{center}
\caption{Equilibrium region (dark gray) in Venue 1 with $\xi_1= -3.5$ and Venue 2 (not shown) has no Last Look. Left panel migration cost is $c=0.05$, and right panel $c=0.025$. The other parameters are $\sigma=1$, $\beta=0.8$, $\rho=0.5$, and $\delta = 0.5\,|\xi|$. Red lines bound the equilibrium region for LAs, blue lines bound the equilibrium region for STs. }\label{fig: equi region}
\end{figure}

In the left panel of the figure the equilibrium region  (dark gray) clearly shows that both venues can co-exist but the number of traders that each venue supports can vary from very few traders to nearly all traders. At all points in this equilibrium, neither STs nor LAs find it optimal to migrate to the other venue. The  region between the blue lines (which includes the dark gray region) is where STs are indifferent between the two venues. Similarly, the region between the red lines is where LAs are indifferent between the two venues.   Here we assume that venues can survive with little order flow or that there is no value to brokers from observing flow. In more realistic scenarios, where brokers impute   value to order flow (so their profit  function is different from the one assumed above), these results will very likely differ -- see discussion leading to equation \eqref{eqn: omega hat for LA}.

If the market is at a point outside the equilibrium region there are incentives to flow between the two venues until it is suboptimal for any type of trader to migrate. The path that traders take from disequilibrium to an equilibrium depends on how quickly they spot, and can act on,  better opportunities. Note that as soon as one trader changes venue, the proportion of LAs in both venues changes and brokers must adjust the quoted spreads to break-even. These changes in both quoted spreads and proportion of LAs, affect the profitability of round-trip trades for LAs and the costs borne by STs, so both types reassess whether they should remain in their current venue or migrate to the other one. %We return to this issue in subsection \ref{subsec: fix xi vary spread}.

Another interesting feature to observe is that the equilibrium region shrinks as migration costs to trades become smaller. In the right panel of the figure the migration cost is $c=0.025$  and we observe that the market \emph{cannot} reach an equilibrium. Clearly,  in markets where migration is costly there are less incentives for traders to switch venues. Similarly, in markets where traders can easily switch venues will show more traffic of traders between them because traders can exploit any discrepancy, however small, between the costs and profits of trading in the two venues.

\subsection{Analytical Characterization of Equilibrium Region}

When the asymptotic forms of the value to LAs and costs to STs provided in Propositions \ref{prop: asymptotic omega ft} and \ref{prop: asymptotic omega la} are used, we can characterize the equilibrium region for the two-venue case in  a compact form. Both constraints \eqref{eqn: equi condition ST} and \eqref{eqn: equi condition LA} reduce to the same form and only differ in the coefficients that appear. Hence, we focus only on rewriting \eqref{eqn: equi condition ST} subject to the condition \eqref{eqn: equi condition spread} and the population preserving constraints.

First, using Proposition \ref{prop: asymptotic omega ft}, \eqref{eqn: equi condition ST} subject to the broker setting the spread to make zero expected profits, i.e. that \eqref{eqn: equi condition spread} is satisfied, reduces to
\[
\left| H_0 + \eta_1^1\,\alpha^1 - \eta_1^2\,\alpha^2 \right| \le c\,,
\]
where $H_0=\eta_0^1-\eta_0^2$. Imposing the population constraint further implies that
\[
\left| H_0 + \eta_1^1\,\frac{x}{x+y} - \eta_1^2\,\frac{M-x}{N-(x+y)} \right| \le c \,,
\]
where $x$ and $y$ represent the number of LAs and STs, respectively, in Venue 1, $N$ is total population size, $M$ is the total number of LAs, and the constant $H_0=\eta_0^1-\eta_0^2$. The population constraints also impose the conditions $0\le x\le M$ and $0 \le x+y\le N$ which implies that the numerator and denominator of each of the fractions appearing above are all non-negative. We can rewrite this inequality as the following pair of inequalities
\[
\begin{split}
H_0 + \eta_1^1\,\frac{x}{x+y} - \eta_1^2\,\frac{M-x}{N-(x+y)}  \lesseqqgtr \pm c\,.
%\\
%H_0 + \eta_1^1\,\frac{x}{x+y} - \eta_1^2\,\frac{M-x}{N-(x+y)}  \ge&\, -c\,.
\end{split}
\]
Multiplying by $(x+y)(N-(x+y))$, which is positive due to the population constraints, we obtain, after some tedious algebra,
\begin{equation}
\begin{split}
(\eta_1^2-\eta_1^1-\zeta_\pm)\, x^2
&+ (\eta_1^2-\eta_1^1-2\,\zeta_\pm)\,x\,y
-\zeta_\pm\,y^2 \\
&+ ((\zeta_\pm+\eta_1^1)\,N-\eta_1^2\,M)\,x
+ (\zeta_\pm\,N-\eta_1^2 \,M)\,y \lesseqqgtr 0\,,
\end{split}
\label{eqn: conic}
\end{equation}
where the constants
\[
\zeta_\pm=H_0\mp c\,.
\]

If the inequalities above  are replaced by equality, then \eqref{eqn: conic} represent conic sections. A standard result shows that, after a rotation and a translation, there are three cases (when non-degenerate). Letting
$\omega_\pm = B^2 - 4\,A\,C$, where $A$, $B$ and $C$ are the coefficients of $x^2$, $xy$ and $y^2$, respectively, then if
\begin{enumerate}
\item $\omega_\pm <0$,  the conic section is an ellipse,

\item $\omega_\pm >0$,  the conic section is a hyperbola, and

\item $\omega_\pm =0$,  the conic section is a parabola.

\end{enumerate}
From \eqref{eqn: conic}, we see that
\[
\omega_\pm=\left(\eta_1^2-\eta_1^1-2\,\zeta_\pm\right)^2+4\,\left(\eta_1^2-\eta_1^1-\zeta_\pm\right)\,\zeta_\pm
= \left(\eta_1^2-\eta_1^1\right)^2 \ge 0\,,
\]
hence the conics are rotated and translated hyperbolae or parabolas. For example, parabolas appear  when $\eta_1^2 = \eta_1^1$ -- one such case is when the two venues are identical. Moreover, by direct substitution into \eqref{eqn: conic}, we see that the hyperbolae go through the origin $(x,\;y)=0$ as well as the corner $(x,\;y) = (M, \;N)$ -- i.e. either there are no traders in Venue 1 (and no flow into that venue), or all traders are in Venue 1 (and there is no flow out of that venue).

%\[
%\boldsymbol\ell_+' \bX + \bX'\boldsymbol\Theta_+\bX \le  \boldsymbol{0}\,,
%\qquad
%\boldsymbol\ell_-' \bX + \bX'\boldsymbol\Theta_-\bX \ge  \boldsymbol{0}
%\]
%where $\bX=\begin{pmatrix}x \\ y\end{pmatrix}$,
%\[
%\boldsymbol\ell_\pm=
%\begin{pmatrix}
%\zeta_\pm\,N-\eta_1^2\,M-\eta_1^1 \\
%\zeta_\pm\,N-\eta_1^2\,M
%\end{pmatrix}
%\,,
%\]
%\[
%\boldsymbol\Theta_\pm =
%\begin{pmatrix}
%b_\pm & \frac{1}{2}(b_\pm-\zeta_\pm) \\
%\frac{1}{2}(b_\pm-\zeta_\pm) & - \zeta_\pm
%\end{pmatrix}\,,
%\qquad
%b_\pm=\eta_1^2+\eta_1^1-\zeta_\pm\,.
%\]
%Completing the square and rearranging, we find
%\[
%\begin{split}
%\left(\bX+ \tfrac{1}{2}\boldsymbol\Theta_+^{-1}\boldsymbol\ell_+\right)'
%\widehat{\boldsymbol\Theta}_+
%\left(\bX+ \tfrac{1}{2}\boldsymbol\Theta_+^{-1}\boldsymbol\ell_+\right) &\le 1 \\
%\left(\bX+ \tfrac{1}{2}\boldsymbol\Theta_-^{-1}\boldsymbol\ell_-\right)'
%\widehat{\boldsymbol\Theta}_-
%\left(\bX+ \tfrac{1}{2}\boldsymbol\Theta_-^{-1}\boldsymbol\ell_-\right) &\ge 1
%\end{split}
%\]
%where $\widehat{\boldsymbol\Theta}_\pm
%=\frac{4}{\boldsymbol\ell'\boldsymbol\Theta^{-1}_\pm\boldsymbol\ell}
%\boldsymbol\Theta_\pm$.
%
%Recall that the equation for a rotated and shifted ellipse is given by the set of points $\bX$ satisfying
%\[
%(\bX-\ba)'\bU' \bD \bU(\bX-\ba)=1\,,
%\]
%and $\ba$ is the center of the ellipse, $\bD$ is a diagonal matrix with positive diagonal entries, $\bU$ is any orthogonal real-valued matrix with unit determinant. Such matrices can be exprssed as $\bU=\begin{pmatrix}\cos\theta & -\sin\theta \\ \sin\theta & \cos\theta\end{pmatrix}$ where $\theta$ represents a counter-clockwise angle of rotation.

\subsection{Path to equilibrium between two venues}\label{subsec: reaching equi}

Here we illustrate  how traders migrate between two venues  until they reach an equilibrium.  We use the closed-form formulae derived above to obtain the equilibrium pairs $(\alpha,\, \Delta)$.  We assume that there are two venues where the proportion of LAs and quoted spreads are such that in each individual venue the broker makes zero net expected profits from trading, however, there may be incentives for traders to migrate.  We assume that traders, whether an LA or an ST, move between venues at a rate  proportional to the gain in expected value, after accounting for switching costs, they receive from making the migration only if these gains are positive. To this end, let $n_{LA}(t)$ and $n_{ST}(t)$ denote the number of LAs and STs in Venue $1$, and let $N_{LA}$, $N_{ST}$, and $N$ denote the total number of LAs, STs, and total participants in the market, we assume the dynamic flow
\begin{subequations}\label{eqn: ODE flow asymp}
\begin{align}
\begin{split}
\frac{dn_{LA}}{dt} =&\, \frac{\kappa_{LA}}{\sigma} \left\{
\left(
  \Omega_{LA}^1\left(\tfrac{n_{LA}}{n_{ST}+n_{LA}}\right)
- \Omega_{LA}^2\left(\tfrac{N_{LA}-n_{LA}}{N-(n_{ST}+n_{LA})}\right)
-c_{LA}\right)_+ \! \mathds 1_{\{n_{LA}<N_{LA}\}} \right. \\
& \qquad - \left.
\left(
  \Omega_{LA}^2\left(\tfrac{N_{LA}-n_{LA}}{N-(n_{ST}+n_{LA})}\right)
- \Omega_{LA}^1\left(\tfrac{n_{LA}}{n_{ST}+n_{LA}}\right)
-c_{LA}\right)_+ \! \mathds 1_{\{n_{LA}>0\}}
\right\}\,,
\end{split}
\\
\begin{split}
\frac{dn_{ST}}{dt} =&\, \frac{\kappa_{ST}}{\sigma} \left\{
\left(
  \wOmega_{ST}^1\left(\tfrac{n_{LA}}{n_{ST}+n_{LA}}\right)
- \wOmega_{ST}^2\left(\tfrac{N_{LA}-n_{LA}}{N-(n_{ST}+n_{LA})}\right)
-c_{ST}\right)_+ \! \mathds 1_{\{n_{ST}<N_{ST}\}} \right. \\
& \qquad - \left.
\left(
  \wOmega_{ST}^2\left(\tfrac{N_{LA}-n_{LA}}{N-(n_{ST}+n_{LA})}\right)
- \wOmega_{ST}^1\left(\tfrac{n_{LA}}{n_{ST}+n_{LA}}\right)
-c_{ST}\right)_+ \! \mathds 1_{\{n_{ST}<0\}}
\right\}\,,
\end{split}
\end{align}
\end{subequations}
where we have suppressed the explicit dependence on $t$ for compactness, the superscripts label the venues, recall that $(x)_+=\max(x, \,0)$, $\kappa_{LA},\kappa_{ST}>0$ are constants which transform the migration gains into rates, and $c_{LA}, c_{ST}\ge 0$ are the costs of switching from one venue to the other.

Throughout we assume that all market makers know exactly the parameters in the model and react immediately to the flow of traders, however, in reality this information would be corrupted by noise. To account for this, we could add in Brownian motion components to \eqref{eqn: ODE flow asymp}, which changes the ordinary differential equations (ODEs) into stochastic differential equations and no equilibria would exist, instead the flow would approach the noise free equilibrium regions, but fluctuate around them.

The above equations define  a system of coupled non-linear  ODEs and we cannot hope to solve them in general.
There are, however, a few simple features of this dynamic flow that we can glean. In the equilibrium region, the right-hand sides of \eqref{eqn: ODE flow asymp} are both zero and there is no migration between venues. In the region where LAs have no incentive to migrate, but the STs do, (e.g., the region between the red lines in Figure \ref{fig: equi region}), then there is flow in only $n_{ST}$. In the region where STs have no incentive to migrate, but the LAs do, (e.g., the region between the blue lines in Figure \ref{fig: equi region}), then there is flow in only $n_{LA}$.

To illustrate how the market reaches an equilibrium we first look at an example where there are two venues that start at a particular point outside the equilibrium region and traders migrate between venues until an equilibrium point is reached. After this example we examine the general case by considering all possible starting points and employ the coupled system of ODEs to show the path that traders take until an equilibrium is reached.

Assume that Venue 1 fixes a rejection threshold, Venue 2 does not have the Last Look option,  and each venue starts with a given  number of LAs and STs.\footnote{Traders who know $\sigma$, $\beta$,  $\xi$,  and $\rho$  can infer the proportions of LAs, in each venue, from posted spreads. } In our first example migration between venues is sequential: at every step, one trader of each type may migrate to the other venue. Brokers  and traders can always observe the  number of LAs and STs trading in the venue. Thus, immediately after migration,  brokers  in both venues calculate the new break-even spreads, traders also calculate the new  expected costs and profits of round-trip trades and reassess whether they should stay or migrate, and so on. This is repeated until there are no incentives to migrate.

Moreover, at the beginning, in Venue 1 there are  $N^1_{LA}=125$   and $N^1_{ST}=375$, so $\alpha_1 = 25\%$. And the starting point in Venue 2 is  $N^2_{LA}=75$ and $N^2_{ST}=425$ so $\alpha_2  = 15\%$. Recall that Venue 2 does not have the Last Look option. Table \ref{tab: fixed xi no last look} shows the starting and equilibrium configuration for two examples: in the left-hand panel Venue 1 employs a rejection threshold $\xi_1=-4$ and in the right-hand panel it employs a  stricter rejection threshold of $\xi_1 = -3.5$.

\begin{table}[h!]
\begin{center}
\begin{minipage}{0.48\textwidth}

\scriptsize

 % Table generated by Excel2LaTeX from sheet 'Sheet2'
\begin{tabular}{rrrrrr}
\hline
\hline
\multicolumn{6}{c}{$\xi_1=-4$, $\xi_2=-\infty$ (no Last Look)} \bigstrut\\
\hline
      & \multicolumn{2}{c}{Initial} & \multicolumn{1}{c}{} & \multicolumn{2}{c}{Final} \bigstrut\\
\cline{2-3}\cline{5-6}    $\alpha$ & 25.0\% & 15.0\% &       & 21.1\% & 19.0\% \bigstrut[t]\\
    $N$   & 500   & 500   &       & 469   & 531 \\
    $N_{LA}$ & 125   & 75    &       & 99    & 101 \\
    $N_{ST}$ & 375   & 425   &       & 370   & 430 \\
    $\Delta^*$ & 0.19  & 0.12  &       & 0.16  & 0.15 \\
    $\widehat\Omega_{ST}$ & 0.20  & 0.12  &       & 0.17  & 0.15 \\
    $\Omega_{LA}$ & 0.58  & 0.68  &       & 0.61  & 0.65 \bigstrut[b]\\

\hline
\hline
\end{tabular}%
%\caption{aaa}
\end{minipage}
\begin{minipage}{0.48\textwidth}

\scriptsize

% Table generated by Excel2LaTeX from sheet 'Sheet2'
\begin{tabular}{rrrrrr}
\hline
\hline
\multicolumn{6}{c}{$\xi_1=-3.5$,  $\xi_2=-\infty$ (no Last Look)} \bigstrut\\
\hline
      & \multicolumn{2}{c}{Initial} & \multicolumn{1}{c}{} & \multicolumn{2}{c}{Final} \bigstrut\\

\cline{2-3}\cline{5-6}    $\alpha$ & 25.0\% & 15.0\% &       & 18.6\% & 21.1\% \bigstrut[t]\\
    $N$   & 500   & 500   &       & 456   & 544 \\
    $N_{LA}$ & 125   & 75    &       & 85    & 115 \\
    $N_{ST}$ & 375   & 425   &       & 371   & 429 \\
    $\Delta^*$ & 0.18  & 0.12  &       & 0.13  & 0.17 \\
    $\widehat\Omega_{ST}$ & 0.19  & 0.12  &       & 0.14  & 0.17 \\
    $\Omega_{LA}$ & 0.55  & 0.68  &       & 0.59  & 0.64 \bigstrut[b]\\

\hline
\hline
\end{tabular}%

%\caption{aaa}

\end{minipage}

\end{center}

\caption{Equilibrium across venues, fixed rejection thresholds and varying spreads,  $\beta=0.8$, $\sigma= 1$, $\rho = 0.5$, $\delta = 0.5\,|\xi|$,  $c=0.05$. } \label{tab: fixed xi no last look}

\end{table}

The two panels in the table show how the market reaches an equilibrium where a venue without Last Look coexists with one where brokers have the right to reject trades. With the assumption that only one trader of each type may migrate at each time-step, we see that equilibrium is reached where the proportion of LAs in each venue is close to 20\%, despite the fact that the starting points were 25\% and 15\%. We observe that in the left-hand side panel, the lowest expected cost of a round-trip for an ST is in the venue without Last Look, but in the right-hand panel STs are better off in the venue with the Last Look option. Moreover, it is also interesting to observe the equilibrium spreads: in the left panel, the venue without Last Look   quotes a \emph{tighter} spread than the venue with Last Look  -- whereas in the right panel we see that the venue with Last Look quotes a tighter spread than the venue  without  Last Look.

\begin{figure}[!t]
\begin{center}
\includegraphics[width=0.45\textwidth]{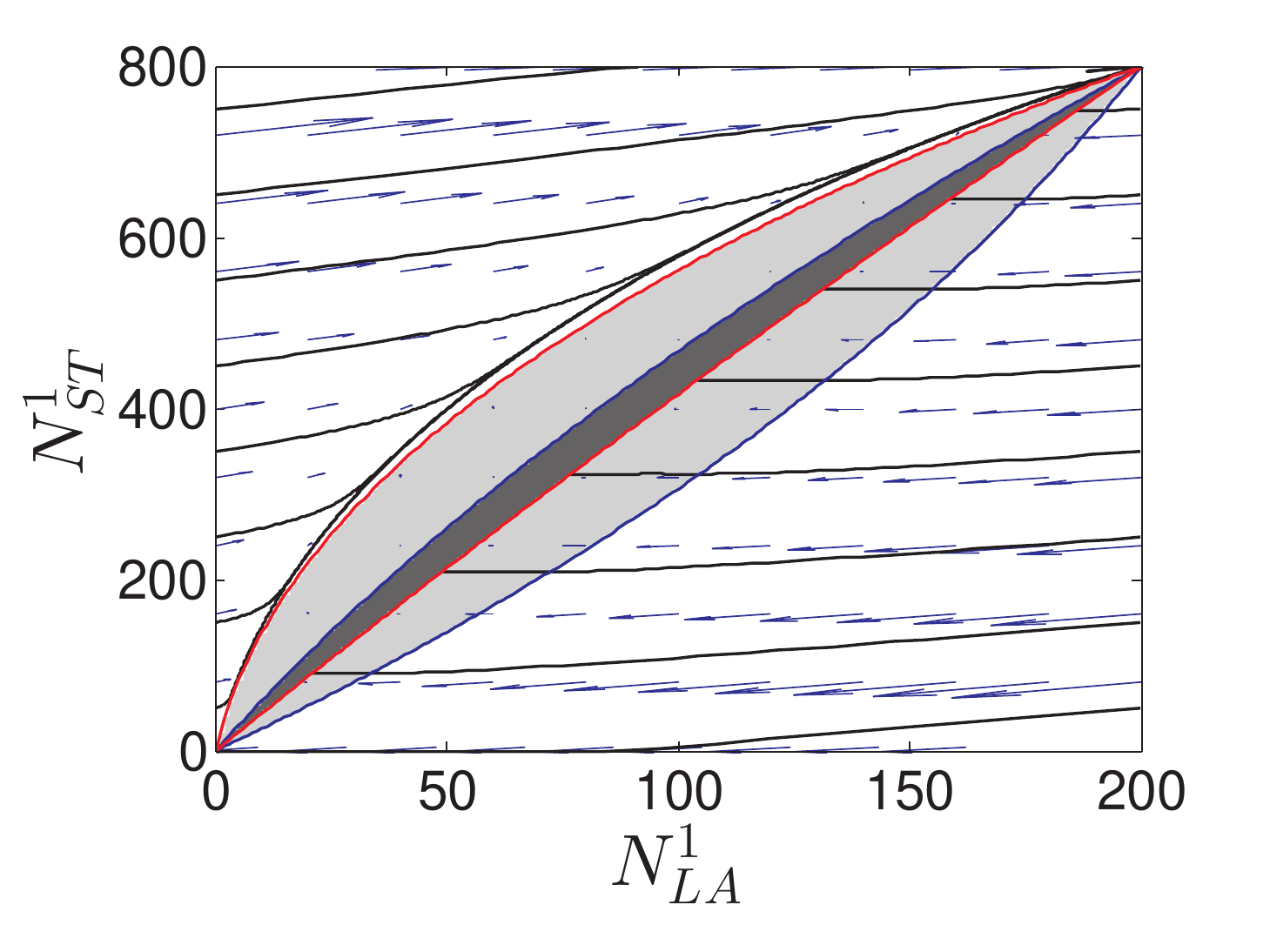}
\includegraphics[width=0.45\textwidth]{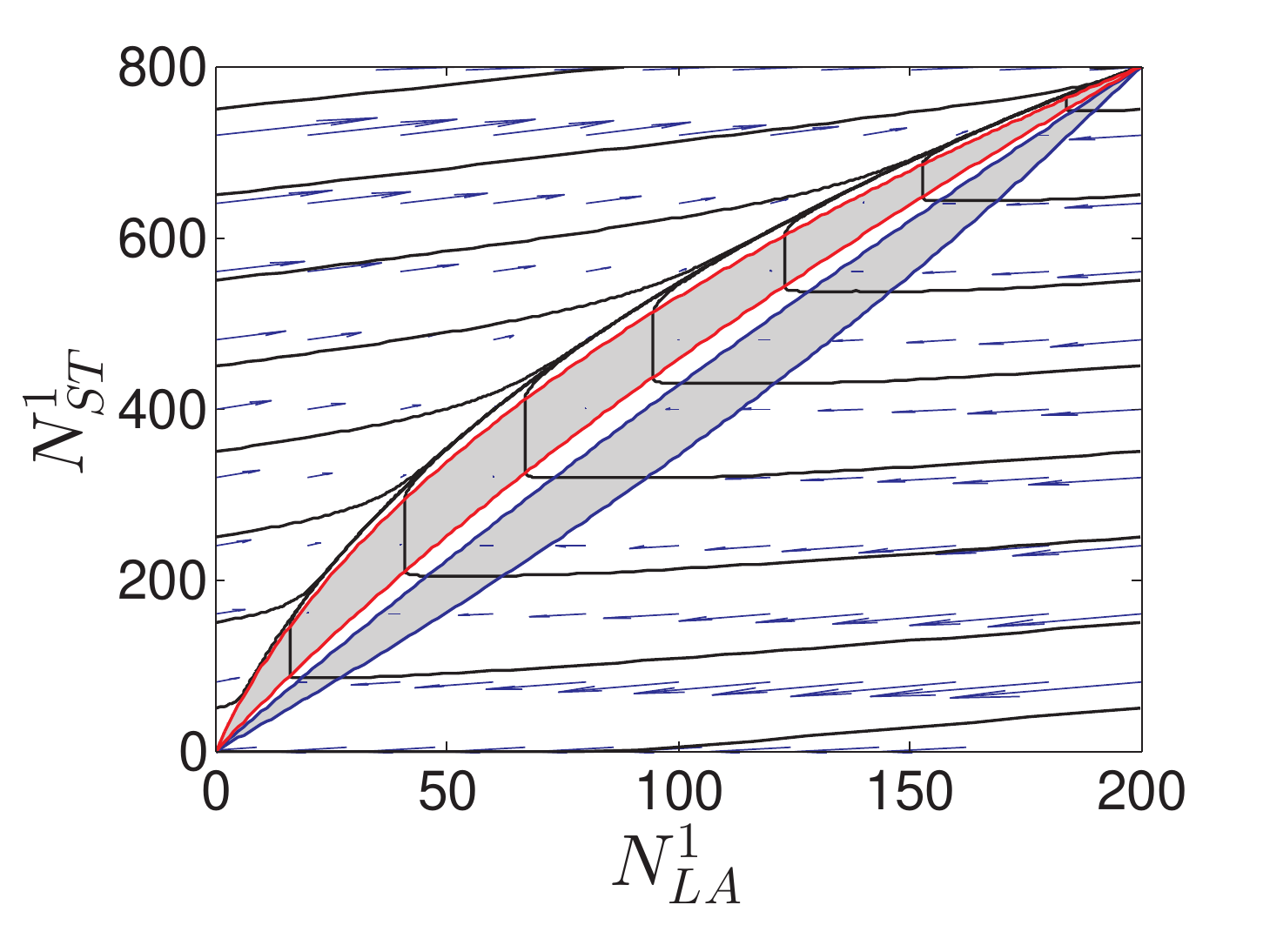}
\end{center}
\caption{Equilibrium region (dark gray) in Venue 1 with $\xi_1= -3.5$ and Venue 2 (not shown) without Last Look, and  $c=0.05$ and $0.025$ in the left and right panels. The other parameters are $\sigma=1$, $\beta=0.8$, $\rho=0.5$, $\kappa_{LA} =40$, $\kappa_{ST} =20$,  and $\delta = 0.5\,|\xi|$. Black lines indicate the migration of traders. Blue arrows  indicate the direction of the migration. Red lines bound the equilibrium region for LAs, blue lines bound the equilibrium region for STs. \label{fig: equi region flow}}
\end{figure}
Now we examine the general case where we consider all possible starting points in each venue and use the migration dynamics described by \eqref{eqn: ODE flow asymp} to show the path to equilibrium.  LAs are faster than other market participants, so they migrate between venues at a faster rate, i.e. $\kappa_{LA}>\kappa_{ST}$, and in particular we use $\kappa_{LA} =40$, $\kappa_{ST} =20$.  Figure \ref{fig: equi region flow} shows the migration paths seen in Venue 1  when migration costs are $c=0.05$ (left panel) and $c=0.025$ (right panel).  Figure \ref{fig: equi region flow} is the same as Figure \ref{fig: equi region} but it also shows, in black lines, the migration path of traders, and the blue arrows show the direction of the migration. Moreover, recall that the region between red lines is where the LAs do not have incentives to migrate to the other venue, and the region between blue lines is where the STs do not have incentives to migrate.

In the left panel, where migration costs are $c=0.05$, we observe that when the starting point is in the `lower triangular' white area, both STs and LAs have incentives to migrate to Venue 2 (they are better off in Venue 2 which has no  Last Look) and equilibrium is eventually reached. In contrast, for any starting point in the `upper triangular' white area, the equilibrium point is where Venue 1 attracts \emph{all}  the traders in the market -- the Venue without Last Look loses all flow to Venue 1.

The picture in the right-hand panel shows that when migration costs are low, so that there is no equilibrium region as already discussed above, traders migrate to two corner solutions: all traders are in Venue 1 or are in Venue 2, i.e. only one FX venue survives in the marketplace. Note that only when the starting point is in the lower triangular region and the number of STs is small, do we see that all traders exit Venue 1 and prefer to trade in Venue 2 without Last Look. In all other cases,
%in the lower triangular region, as well as the upper triangular region,
migration occurs until \emph{all} traders leave Venue 2 in favor of Venue 1 with the Last Look option.

Moreover, the migration flows shown in the paths that start in the lower triangular area, and that end up where all traders are in Venue 1, follow an interesting pattern. First we observe that LAs exit Venue 1 and there is not much change in the population of STs. This pattern is seen until the market reaches the region where the STs are in equilibrium (between the blue lines) and at that point STs stop flowing and LAs continue flowing out of Venue 1. Then, the flow reaches the region between the two equilibrium regions. In this region, LAs flow out of Venue 1, while STs flow into Venue 1, causing the flow to get closer to the region where LAs are in equilibrium (between the red lines). Once the flow is in the region where LAs are in equilibrium, they do not flow out of Venue 1 anymore, but STs continue flowing into Venue 1. Then the flow exits the LA equilibrium region and both STs and LAs flow into Venue 1 at a rate which prevents the flow from entering the LA equilibrium region. The reason is that there is migration pressure from STs into Venue 1 within the LA equilibrium region. Interestingly, all these paths lead to an equilibrium where the venue \emph{without} Last Look loses \emph{all} its traders.

Recall that in our model $\alpha$ may also be interpreted as the ratio of latency arbitrage trades to all trades in the market. Thus, the results above may be interpreted as spreads and equilibria across venues attracting trades. For example, an ST could require different immediacy for her trades (which would be  reflected in the effective cost component $C_{ST}$ for each trade) and this determines on which venue the  ST executes the trade. Trades that require guaranteed execution  have a high $C_{ST}$, so are executed on venues with lenient or no rejection threshold.  Finally, although we do not model the flow of market makers between venues, in our set-up  brokers will cease to provide liquidity in venues that disappear and will make markets in other venues. Similarly, venues that do not cease to exist but lose order flow, will also see brokers switch to venues that gained order flow.

\section{Conclusions}\label{sec: conclusions}

We show that risk-neutral market makers or brokers quote tighter spreads to the market when they reject loss-leading trades using the Last Look option.  The Last Look option helps market makers to mitigate their losses to latency arbitrageurs and also reduces the wealth transfer between slow traders and those who arbitrage the market by trading on stale quotes. In our setup the market maker sets spreads so that she makes zero expected profits.

The  Last Look option consists of a time frame and a rejection threshold used by the broker to reject trades ex-post. Since the market maker cannot distinguish the type of trader behind the trades, latency arbitrageur or slow trader, the Last Look option is enforced across all trades. Our results  show that brokers are indifferent between different rejection thresholds because they set optimal spreads so that her losses to latency arbitrageurs are covered by the other traders in the market.

We show how effective is the Last Look option as a function of the rejection threshold which determines the market maker's tolerance to losses on a trade-by-trade basis. When the venue sets a very strict threshold (i.e. any trade that yields a modest profit to the traders is cancelled by the broker), slow traders end up being penalized too often.  On the other hand, if the rejection threshold is set so that only trades which result in large losses to the market maker are rejected, the Last Look option becomes very effective at singling out latency arbitrageurs given the fact that the trade is rejected.

At first sight it  seems that a `relaxed' threshold is better because the probability that a rejected trade came from a latency arbitrageur is higher. The flip side, however,  is that rejection rarely happens, hence losses to latency arbitrage are high, and this results in higher quoted spreads.

Moreover, since  the risk-neutral market maker determines the spread so that expected profits are zero, there is a one-to-one mapping between optimal spreads and rejection thresholds which are set by the venue. Strict thresholds lead to tight spreads, and lenient thresholds lead to large spreads. The extreme case is when the threshold is so lenient that no trades are rejected which is equivalent to trading in a venue without Last Look.   Therefore, when there is only one FX venue, the market maker is indifferent between different levels of the threshold.

Slow traders, on the other hand, are not indifferent between rejection thresholds.  Slow traders benefit from the Last Look option because market makers cap their losses to latency arbitrageurs, but slow traders' most profitable trades are also cancelled. Thus, when slow traders account for forgone earnings  (due to rejected trades), immediacy costs, and the costs from returning to the market to complete the trade,   there is an optimal threshold that minimizes their costs of trading in the venue with Last Look.  If there is only one FX venue, this optimal threshold could be the extreme where market makers \emph{never} reject trades. In other words, depending on: the proportion of latency arbitrageurs acting in the market,  and on the latency of the slow traders, slow traders will seek or avoid venues with Last Look.

When there is more than one FX venue, market makers still post spreads that ensure that losses to LAs are recovered from STs. Competition across venues, however, incentivizes traders to migrate to those where they are better off.  We show that there is an equilibrium region where there are no incentives to migrate. If the market starts outside this region, traders will migrate until an equilibrium is reached. This equilibrium could be one where both venues coexist or one where only one venue survives.

Interestingly, we show that when there are two venues, one with and one without Last Look, the equilibrium reached by the market is chiefly dependent on the proportion of latency arbitrageurs trading in each market. When the no Last Look venue starts with a low proportion of  latency arbitrageurs (i.e. a high proportion of latency arbitrageurs in the Last Look venue) the market reaches an equilibrium where both venues coexist. If the market's starting point, however, is one where the venue with Last Look has a low proportion of latency arbitrageurs, the market reaches an equilibrium  where the venue enforcing Last Look attracts \emph{all} order flow, i.e. only the Last Look venue survives.

\appendix

\section{Proof of Results}

\subsection{Proof of Proposition \ref{thm: loss to LA without lastlook}} \label{sec: proof of loss to LA without lastlook}

The result follows from a straightforward computation of the expectation:
\begin{eqnarray*}
\EE_0\left[(P_1-(P_0+\Delta))_+\right]
&=&
\int_{-\infty}^\infty \left(\sigma\,z - \Delta\right)
\II_{\{\sigma\,z-\Delta > 0\}}\,\frac{e^{-\frac{1}{2}z^2}}{\sqrt{2\pi}}\,dz \\
&=& \left.-\sigma\,\frac{e^{-\frac{1}{2}z^2}}{\sqrt{2\pi}}\right|_{\frac{\Delta}{\sigma}}^\infty
-\Delta \,\Phi\left(\frac{\Delta}{\sigma}\right) \\
&=& \sigma\,\phi\left(\frac{\Delta}{\sigma}\right)-\Delta \Phi\left(\frac{\Delta}{\sigma}\right)\,.
\end{eqnarray*}
\qed

\subsection{Proof of Proposition \ref{prop: unique finite solution for spread no last look}} \label{sec: proof of unique finite solution for spread no last look}

It is easy to check that $f(x) = \phi\left(x\right)-x\,\Phi\left(-x\right)$ is decreasing and convex on $x\in[0,+\infty)$. In particular, we have
\[
f'(x) = -\Phi\left(-x\right) \le 0\,, \qquad
f''(x) = \phi\left(x\right) > 0\,, \qquad
\forall x\ge0\,.
\]
Moreover, $f(0) = \sqrt{\frac{2}{\pi}}$, $f'(0)=-\frac{1}{2}$, $f''(0)>0$ and $\lim_{x\to\infty} f'(x)=0$. Let $g$ denote the line $g(x)=\frac{1-\alpha}{\alpha}\,x$. Clearly, $g(0)<f(0)$ and since $f$ is convex, we must have $f'(x)\le\lim_{x\to+\infty}f'(x)=0$. Together with the fact that $f'(0)=-\frac{1}{2}$ and $f''(0)>0$, we see that there must exist a single intersection point of $f$ and $g$ on $x\in[0,+\infty)$ if and only if the slope of the line $g$ is greater than the asymptotic slope of $f$, i.e. as long as the slope is positive. This occurs if and only if $\alpha\in[0,1]$.\qed

\subsection{Proof of Proposition \ref{prop: asymp slow spread no last look}}
\label{sec: proof of asymp slow spread no last look}

First, rearrange the root of \eqref{eqn:balance no lastlook} to write
\[
2\alpha\left(\phi\left(x^*(\alpha)\right)-x^*(\alpha)\,\Phi\left(-x^*(\alpha)\right)\right) =(1-\alpha)\,x^*(\alpha)\,,
\]
and note the explicit dependence on $\alpha$. Clearly, in the limit $\alpha\downarrow0$, $x^*(\alpha)\downarrow0$. With this observation, next, write
\[
x^*(\alpha) = c\,\alpha+o(\alpha)\,,
\]
and aim to find the constant $c$. Inserting this expansion into the previous expression and expanding in $\alpha$ we find that
\[
2\alpha\left[\left(\phi(0) + c\,\alpha \,\phi'(0)\right)-c\,\alpha
\,\left\{ \Phi(0)-c\,\alpha\,\phi(0)\right\} \right] =(1-\alpha)\,c\,\alpha+o(\alpha)\,,
\]
so that $c = 2 \phi(0)=\sqrt{\frac{2}{\pi}}$ and the result follows. \qed

\subsection{Computation of \eqref{eqn: Omega ST on stale}}
\label{sec: derive Omega ST on stale}

To derive this result, first note that due to symmetry both expectations are equal and therefore,
\[
\Omega_{ST\,|\,\text{stale}}
=-\EE\left[\left(P_1-P_0-\Delta\right)\,\II_{\{P_0-P_2>\xi\}}\right]\,.
\]
Next, separate the two terms in the expectation into two pieces: $A=\EE\left[\left(P_1-P_0\right)\,\II_{\{P_0-P_2>\xi\}}\right]$ and $B=\EE\left[\Delta\,\II_{\{P_0-P_2>\xi\}}\right]$.

The computation of $B$ is straightforward. Since
\[
P_0-P_2=\sigma(Z_1+Z_2) \stackrel{\footnotesize d}{=} \sigma\left(Z_1+\rho\,Z_1+\sqrt{1-\rho^2}\,Z_1^\perp\right)\,,
\]
where $Z_1^\perp$ is a standard normal independent of $Z_1$. Therefore, $P_0-P_2$ is normal with mean $0$ and standard deviation $\sqrt{(1+\rho)^2+(1-\rho^2)}=\sqrt{2(1+\rho)}$, and so
\[
B=\Delta\,\Phi\left(\frac{\xi}{\sqrt{2(1+\rho)}}\right)\,.
\]

Next,  we need the following expectation to compute $A$:
\begin{eqnarray*}
\EE\left[Z_1\,\II_{\{Z_1+Z_2 > c\}} \right]
&=& \EE\left[Z_1\,\II_{\left\{(1+\rho)\,Z_1+\sqrt{1-\rho^2}\,Z_1^\perp > c\right\}} \right] \\
&=& \int_{-\infty}^\infty \int_{a - b\,\zeta_1 }^\infty
\zeta_1\,e^{-\frac{1}{2}\zeta_1^2 - \frac{1}{2} \zeta_2^2}\,\frac{d\zeta_2}{\sqrt{2\pi}}\frac{d\zeta_1}{\sqrt{2\pi}}\,,
\end{eqnarray*}
where $c$ is an arbitrary constant,  $a=\frac{c}{\sqrt{1-\rho^2}}$ and $b=\frac{1-\rho}{\sqrt{1-\rho^2}}$.
Continuing the computation,
\begin{align*}
\EE\left[Z_1\,\II_{\{Z_1+Z_2 > c\}} \right]
&= \int_{-\infty}^\infty
\zeta_1\,\Phi\left(b\,\zeta_1-a\right)\,e^{-\frac{1}{2}\zeta_1^2 }\,\frac{d\zeta_1}{\sqrt{2\pi}} \\
&= \frac{b}{\sqrt{2\pi}}\int_{-\infty}^\infty
e^{-\frac{1}{2}\zeta_1^2 -\frac{1}{2}(b\,\zeta_1-a)^2}\,\frac{d\zeta_1}{\sqrt{2\pi}} & \text{(integration by parts)} \\
&= \frac{b}{\sqrt{2\pi}}\,e^{-\frac{1}{2}\frac{a^2}{1+b^2}}\int_{-\infty}^\infty
e^{-\frac{1+b^2}{2}\left(\zeta_1-\frac{a\,b}{1+b^2}\right)^2}\,\frac{d\zeta_1}{\sqrt{2\pi}} & \text{(completing squares)} \\
&= \frac{b}{\sqrt{1+b^2}}\;\phi\left(\frac{a^2}{1+b^2}\right) \\
&= \sqrt{\frac{1-\rho}{2}}\;\phi\left(\frac{c}{\sqrt{2(1-\rho)}}\right)\,.
& \text{(inserting $a$ and $b$)}
\end{align*}
Thus, we use the above results to obtain
\[
A=\EE\left[\sigma\,Z_1\,\II_{\{Z_1+Z_2 > \frac{\xi}{\sigma}\}} \right]\,,
\]
and the previous result for $B$, and we arrive at \eqref{eqn: Omega ST on stale}.

\subsection{Proof of Proposition \ref{prop: Prob ST exec}}
\label{sec: Proof of Prob ST exec}

By conditioning on whether the trader receives the update or not, we have
\begin{align}
\PP[ P_e - P_2 > \xi]
&= \beta\, \PP\left[ P_e - P_2 > \xi \,| \,\text{update}\right] + (1-\beta) \, \PP\left[ P_e - P_2 > \xi \,| \, \text{stale}\right] \\
&= \beta\, \PP\left[ P_1 - P_2 > \xi \right] + (1-\beta) \, \PP\left[ P_0 - P_2 > \xi \right]\,,
\end{align}
and the result follows from computing these unconditional probabilities.

\subsection{Proof of Proposition \ref{thm: Losses to LA with last look}}
\label{sec: Proof of Losses to LA with last look}

To compute this we derive each term separately. Firstly,
\begin{align}
\Omega_{LA\,|\,buy}
=&\, \EE\left[ \; (P_1-P_0-\Delta)_+\,\II_{\{\,P_0-P_2> \xi \}}\; \right] \nonumber \\
=&\, \EE\left[ \; (P_1-P_0)\,\II_{\{\,P_1-P_0>\Delta\,,\,P_0-P_2> \xi\,\}}\; \right] \label{eqn: Omega LA term A}\\
& \quad - \Delta \; \PP\left[\;P_1-P_0>\Delta\,,\,P_0-P_2> \xi\,\right]\,.\label{eqn: Omega LA term B}
\end{align}
Let $\sigma \,B$ denote the first term above \eqref{eqn: Omega LA term A} and $\Delta\,A$ denote the second term above \eqref{eqn: Omega LA term B}.

First, focus on computing $A$, so we have
\begin{align*}
A &= \PP\left[\;P_1-P_0>\Delta\,,\,P_0-P_2> \xi\,\right] \\
&= \PP\left[\;Z_1>\tDelta\,,\,Z_1+Z_2< -\txi\,\right] \\
&= \PP\left[\;Z_1>\tDelta\,,\,\frac{(1+\rho)\,Z_1+\sqrt{1-\rho^2}\,Z_1^\perp}{\sqrt{2(1+\rho)}} <-\frac{\txi}{\sqrt{2(1+\rho)}}\;\right] \\
&= \PP\left[\;Z_1>\tDelta\,,\,Z_3 <- \frac{\txi}{\sqrt{2(1+\rho)}}\;\right] \\
&= \PP\left[\;Z_3 <- \frac{\txi}{\sqrt{2(1+\rho)}}\;\right] - \PP\left[\;Z_1<\tDelta\,,\,Z_3 <- \frac{\txi}{\sqrt{2(1+\rho)}}\;\right]\,,
\end{align*}
where $Z_1^\perp$ is a standard normal r.v. independent of $Z_1$, and $Z_3$ is standard normal r.v. correlated with $Z_1$ with correlation $\sqrt{(1+\rho)/2}$. The expression for $A$ in \eqref{eqn:LA Prob Exec I} now follows immediately.

Next, for $B$ we have
\begin{align*}
B =& \EE\left[Z_1\,\II_{\{Z_1>\tDelta\,,\,Z_2 < -\txi \}} \right] \\
=& \, \EE\left[Z_1\,\II_{\left\{Z_1>\tDelta\,,\,{(1+\rho)Z_1+\sqrt{1-\rho^2}Z_1^\perp} <- \txi\right\}} \right] \\
=& \int_{\tDelta}^\infty \zeta\,
\Phi\left(a-b\,\zeta\right) e^{-\frac{1}{2}\zeta^2} \, \frac{d\zeta}{\sqrt{2\pi}}\,,
\end{align*}
where $a=-\txi/\sqrt{1-\rho^2}$ and $b=(1+\rho)/\sqrt{1-\rho^2}$. Continuing the computation,
\begin{align*}
B=& -\frac{b}{\sqrt{2\pi}}\int_{\tDelta}^\infty
e^{-\frac{1}{2}(a-b\,\zeta)^2}\, e^{-\frac{1}{2}\zeta^2} \, \frac{d\zeta}{\sqrt{2\pi}} + \phi(\tDelta)\,\Phi(a-b\tDelta)
\qquad \qquad \text{   (integration by parts)}
 \\
=& -\frac{b}{\sqrt{2\pi}}\,e^{-\frac{1}{2}\frac{a^2}{1+b^2}}
\int_\tDelta^\infty e^{-\frac{1+b^2}{2}(\zeta-\frac{a\,b}{1+b^2})^2}\,\frac{d\zeta}{\sqrt{2\pi}}
+ \phi(\tDelta)\,\Phi(a-b\tDelta)
\qquad \text{(completing squares)} \\
=&\, - \frac{b}{\sqrt{1+b^2}} \,\phi\left(\frac{a^2}{1+b^2}\right)\,\Phi\left(\frac{a\,b}{\sqrt{1+b^2}}-\tDelta\sqrt{1+b^2}\right)
+ \phi(\tDelta)\,\Phi(a-b\tDelta)\,.
\end{align*}
The expression for $B$ in \eqref{eqn:LA Prob Exec II} follows by substituting the expression for $a$ and $b$ in the above. \qed

\subsection{Proof of Proposition \ref{prop: asymptotic spread}}
\label{sec: proof of prop: asymptotic spread}

First, from the expression for $\Omega_{ST}$ in \eqref{eqn: Omega ST exact} we have
\begin{equation}
\Omega_{ST}(\tDelta) = \sigma\left( a_{ST} + \tDelta\, b_{ST}\right)\,,
\label{eqn: Omega ST appendix}
\end{equation}
where
\[
a_{ST} =(1-\beta)\,\sqrt{\tfrac{1+\rho}{2}}\;\phi\left(\tfrac{1}{\sqrt{2(1+\rho)}}\,\tfrac{\xi}{\sigma}\right)\,,
\]
and
\[
b_{ST} =
\beta\,\Phi\left(-\tfrac{\xi}{\sigma}\right)+(1-\beta)
\Phi\left(-\tfrac{1}{\sqrt{2(1+\rho)}}\,\tfrac{\xi}{\sigma}\right)\,,
\]
and $\tDelta=\Delta/\sigma$.

Next, from  \eqref{eqn: Omega LA exact},
\[
\Omega_{LA}(\tDelta) = 2\,(B(\tDelta)- A(\tDelta)\,\tDelta)\,\sigma\,,
\]
where
\begin{align}
\nonumber A(\tDelta)&=\, a_{LA}
-\Phi_{\trho}\left(\tDelta\;,\;\hxi\right)\,,
\end{align}
with the constants
\[
a_{LA} = \Phi\left(-\tfrac{\txi}{\sqrt{2(1+\rho)}}\right)\,,
\quad
\hxi = -\tfrac{\txi}{\sqrt{2(1+\rho)}}\,,
\quad
\trho=\sqrt{\tfrac{1+\rho}{2}}\,,
\]
and
\begin{align}
\nonumber B(\tDelta)
&= \,
\phi(\tDelta)\,
\Phi\left(\cxi-\crho\,\tDelta\right)
-b_{LA} \,\Phi\left( \bxi -\brho\,\tDelta \right)\,,
\end{align}
with constants
\[
\cxi = -\tfrac{\txi}{\sqrt{1-\rho^2}}\,, \quad
\crho = \tfrac{1+\rho}{\sqrt{1-\rho^2}}\,, \quad
b_{LA} = \trho\,\phi\left(\tfrac{\txi}{\sqrt{2(1+\rho)}}\right)\,, \quad
\bxi = -\tfrac{\txi}{\sqrt{2(1-\rho)}}\,, \quad
\brho = \tfrac{2}{\sqrt{2(1-\rho)}}\,.
\]

The optimal spread (relative to volatility) is defined as the solution to
\[
(1-\alpha)\, \Omega_{ST}(\tDelta^*) - \alpha\,\Omega_{LA}(\tDelta^*) =0\,,
\]
and writing $\tDelta^*=\tDelta_0+\tDelta_1\,\alpha+o(\alpha)$, we need to solve (keeping terms to $o(\alpha)$):
\[
(1-\alpha) \, \left(\Omega_{ST}(\tDelta_0)+\alpha\,\tDelta_1\,\Omega_{ST}'(\tDelta_0) \right) - \alpha\,\Omega_{LA}(\tDelta_0) = o(\alpha)\,,
\]
and collecting terms of equal orders we have
\[
\Omega_{ST}(\tDelta_0)
+ \alpha \left\{
\tDelta_1\,\Omega_{ST}'(\tDelta_0)
-\Omega_{ST}(\tDelta_0)
-\Omega_{LA}(\Delta_0) \right\}
=o(\alpha)\,.
\]
Solving first for $\tDelta_0$ by setting the first term above to zero and using \eqref{eqn: Omega ST appendix}, we arrive at  $\tDelta_0 = \frac{a_{ST}}{b_{ST}}$ and  \eqref{eqn: asymp spread Delta_0} follows.

Hence, the above equation becomes
\[
\alpha \left\{
\tDelta_1\,\Omega_{ST}'(\tDelta_0)
+\Omega_{LA}(\Delta_0) \right\}
=o(\alpha)\,.
\]
Finally, since $\Omega_{ST}'(\tDelta) = b_{ST}$, setting the terms in the braces to zero leads to $\tDelta_1 = \frac{\Omega_{LA}(\tDelta_0)}{b_{ST}}$ and \eqref{eqn: asymp spread Delta_1} follows immediately. \qed

\subsection{Proof of Proposition \ref{prop: asymptotic omega ft}}
\label{sec: proof of prop: asymptotic omega ft}

From the proof of Proposition \ref{prop: asymptotic spread} in \ref{sec: proof of prop: asymptotic spread}, and since $\Psi_{ST}$ is independent of $\alpha$ and therefore $\Delta$, when the broker sets spreads at their optimal level according to \eqref{eqn: optimal spread balancing equation}, we immediately have that
\begin{align*}
\widehat\Omega_{ST} =&\, \Omega_{ST}(\tDelta^*)+ \delta\,(1-\Psi_{ST}) \\
=&\, \sigma(a_{ST} + b_{ST} (\tDelta_0+\alpha\,\tDelta_1))+ \delta\,(1-\Psi_{ST}) +o(\alpha) \\
=&\, \sigma\ \alpha\,b_{ST}\,\tDelta_1+ \delta\,(1-\Psi_{ST}) +o(\alpha)\,,
\end{align*}
and using $\tDelta_1 = \frac{\Omega_{LA}(\tDelta_0)}{b_{ST}}$, the proof is complete. \qed

\subsection{Proof of Proposition \ref{prop: asymptotic omega la}}
\label{sec: proof of prop: asymptotic omega la}

Using the same notation as in the proof of Proposition \ref{prop: asymptotic spread} in \ref{sec: proof of prop: asymptotic spread}, we have
\[
\Omega_{LA}(\tDelta) = 2\,\left(B(\tDelta)- A(\tDelta)\,\tDelta\right)\,\sigma\,,
\]
where
\begin{align}
\nonumber A(\tDelta)&=\, a_{LA}
-\Phi_{\trho}\left(\tDelta\;,\;\hxi\right)\,,
\end{align}
and
\begin{align}
\nonumber B(\tDelta)
&= \,
\phi(\tDelta)\,
\Phi\left(\cxi-\crho\,\tDelta\right)
-b_{LA} \,\Phi\left( \bxi -\brho\,\tDelta \right)\,.
\end{align}
When the broker sets spreads at their optimal level so that her expected profit and loss is zero, we have
\begin{align*}
\frac{1}{2\sigma}\Omega_{LA}(\tDelta^*)
=& \,
\left\{B(\tDelta_0+\alpha\,\tDelta_1)- A(\tDelta_0+\alpha\,\tDelta_1)\,(\tDelta_0+\alpha\,\tDelta_1)
\right\} + o(\alpha) \\
=& \,
\left\{B(\tDelta_0)- A(\tDelta_0)\,\tDelta_0\right\}
+ \left\{ B'(\tDelta_0) - A(\tDelta_0) - A'(\tDelta_0)\,\tDelta_0\right\} \alpha + o(\alpha)\,.
\end{align*}
Next, by direct computations
\[
B'(x) = -x\,\phi(x)\,\Phi\left(\cxi-\crho\, x\right) - \crho \,\phi(x)\,\phi\left(\cxi-\crho x\right)+\brho\,b_{LA}\,\Phi\left(\bxi-\brho\, x\right)\,,
\]
and
\[
A'(x) = -\partial_1\Phi_{\trho}\left(\tDelta,\hxi\right)\,.
\]
Further, standard computations show that
\begin{align*}
\partial_1\Phi_{c}(x,y)
=&\, \int_{-\infty}^y\phi_c(x,y)\,dx\,dy \\
=&\, \int_{-\infty}^y \exp\left\{-\tfrac{1}{2(1-c^2)}\left(x^2 - 2c\,x\,u + u^2 \right)\right \}\frac{dx\,dy}{2\pi} \\
=&\, \frac{1}{\sqrt{2\pi}} e^{-\frac{1}{2}x^2} \int_{-\infty}^{\frac{y-c\,x}{\sqrt{1-c^2}}} e^{-\frac{1}{2}z^2}\frac{dz}{\sqrt{2\pi}} \\
=&\, \sqrt{1-c^2}\,\phi(x)\,\Phi\left(\frac{y-c\,x}{\sqrt{1-\rho^2}}\right)\,.
\end{align*}
Inserting the explicit expressions for the various constants completes the proof. \qed

\clearpage

\section*{References}
\bibliographystyle{chicago}
\bibliography{LastLookRefs}

\begin{thebibliography}{}

\bibitem[\protect\citeauthoryear{{Bank of England, H.M. Treasury, and Financial
  Conduct Authority}}{{Bank of England, H.M. Treasury, and Financial Conduct
  Authority}}{2015}]{BoE}
{Bank of England, H.M. Treasury, and Financial Conduct Authority} (2015, June).
\newblock How fair and effective are the fixed income, foreign exchange and
  commodities markets?
\newblock Fair and Effective Markets Review.

\bibitem[\protect\citeauthoryear{Copeland and Galai}{Copeland and
  Galai}{1983}]{copeland1983information}
Copeland, T.~E. and D.~Galai (1983).
\newblock Information effects on the bid-ask spread.
\newblock {\em The Journal of Finance\/}~{\em 38\/}(5), 1457--1469.

\bibitem[\protect\citeauthoryear{de~Jong and Rindi}{de~Jong and
  Rindi}{2009}]{FrankDeJong2009}
de~Jong, F. and B.~Rindi (2009).
\newblock {\em The Microstructure of Financial Markets\/} (1st ed.).
\newblock Cambridge University Press.

\bibitem[\protect\citeauthoryear{Glosten and Milgrom}{Glosten and
  Milgrom}{1985}]{glostenMilgrom}
Glosten, L.~R. and P.~R. Milgrom (1985).
\newblock Bid, ask and transaction prices in a specialist market with
  heterogeneously informed traders.
\newblock {\em Journal of Financial Economics\/}~{\em 14\/}(1), 71--100.

\bibitem[\protect\citeauthoryear{Grossman and Miller}{Grossman and
  Miller}{1988}]{GrossmanMiller1988}
Grossman, S.~J. and M.~H. Miller (1988, July).
\newblock Liquidity and market structure.
\newblock {\em The Journal of Finance\/}~{\em 43\/}(3), 617--37.

\bibitem[\protect\citeauthoryear{Oomen}{Oomen}{2017a}]{OomenAgr}
Oomen, R. (2017a).
\newblock Execution in an aggregator.
\newblock {\em Quantitative Finance\/}~{\em 17\/}(3), 383--404.

\bibitem[\protect\citeauthoryear{Oomen}{Oomen}{2017b}]{OomenLastLook}
Oomen, R. (2017b).
\newblock Last look.
\newblock {\em Quantitative Finance\/}~{\em 17\/}(7), 1057--1070.

\end{thebibliography}
\end{document}